\documentclass[aos]{imsart}

\RequirePackage{amsthm,amsmath,amsfonts,amssymb}
\RequirePackage[numbers,sort&compress]{natbib}
\RequirePackage[colorlinks,citecolor=blue,urlcolor=blue]{hyperref}
\RequirePackage{graphicx}
\usepackage{thmtools} % for table of theorems
\usepackage{tikz} % for tikzpicture
\usepackage{tkz-berge}
\startlocaldefs
\theoremstyle{plain}

\newtheorem{theorem}{Theorem}[section]

\newtheorem{proposition}[theorem]{Proposition}
\newtheorem{corollary}[theorem]{Corollary}
\theoremstyle{definition}

\endlocaldefs

\begin{document}
\newcommand{\bool}[0]{\{0,1\}}
\newcommand{\booln}[0]{\bool^n}
\newcommand{\real}[0]{\mathbb{R}}
\newcommand{\integernonneg}[0]{\mathbb{Z}_{\ge 0}}
\newcommand{\vect}[1]{\boldsymbol{#1}}
\newcommand{\mat}[1]{\mathbf{#1}}
\newcommand{\abs}[1]{\left\vert#1\right\rvert}
\newcommand{\set}[2]{\left\{#1 \, \middle\vert \, #2\right\}}
\newcommand{\indicator}[1]{\left[#1\right]}
\newcommand{\idx}[2]{\left[#1\right]_{#2}}
\newcommand{\powerset}[1]{\mathcal{P}\left(#1\right)}

\newcommand{\prob}[1]{\mathbb{P}\left(#1\right)}
\newcommand{\avg}[1]{\left\langle#1\right\rangle}
\newcommand{\avgpi}[2]{\left\langle#2\right\rangle_{#1}}
\newcommand{\probpi}[2]{\mathbb{P}_{#1}\left(#2\right)}
\newcommand{\normpi}[2]{\lVert#2\rVert_{2,#1}}
\newcommand{\cond}[2]{\left.#1\,\middle\vert\,#2\right.}
\newcommand{\condprob}[2]{\mathbb{P}\left(#1\,\middle\vert\,#2\right)}
\newcommand{\expect}[1]{\mathbb{E}\left[#1\right]}
\newcommand{\expectpi}[2]{\mathbb{E}_{#1}\left[#2\right]}
\newcommand{\condexpect}[2]{\mathbb{E}\left[#1\,\middle\vert\,#2\right]}
\newcommand{\condexpectpi}[3]{\mathbb{E}_{#1}\left[#2\,\middle\vert\,#3\right]}
\newcommand{\var}[1]{\operatorname{Var}\left(#1\right)}
\newcommand{\variancepi}[2]{\operatorname{Var}_{#1}\left(#2\right)}
\newcommand{\condvariancepi}[3]{\operatorname{Var}_{#1}\left(#2\,\middle\vert\,#3\right)}
\newcommand{\varub}[1]{\operatorname{Var}^{\le}\left(#1\right)}
\newcommand{\varlb}[1]{\operatorname{Var}^{\ge}\left(#1\right)}
\newcommand{\varubest}[1]{\widehat{\operatorname{Var}^{\le}}\left(#1\right)}
\newcommand{\varlbest}[1]{\widehat{\operatorname{Var}^{\ge}}\left(#1\right)}
\newcommand{\varubestpi}[2]{\widehat{\operatorname{Var}^{\le}_{#1}}\left(#2\right)}
\newcommand{\covpi}[3]{\operatorname{Cov}_{#1}\left(#2,#3\right)}
\newcommand{\covub}[2]{\operatorname{Cov}^{\le}\left(#1,#2\right)}
\newcommand{\covlb}[2]{\operatorname{Cov}^{\ge}\left(#1,#2\right)}
\newcommand{\covubest}[2]{\widehat{\operatorname{Cov}^{\le}}\left(#1,#2\right)}
\newcommand{\covlbest}[2]{\widehat{\operatorname{Cov}^{\ge}}\left(#1,#2\right)}
\newcommand{\nth}[1]{#1^\text{th}}
\newcommand{\probdist}[2]{\vect{d}_{#1||#2}}
\newcommand{\chisqdiv}[2]{\chi^2\!\left(#1;#2\right)}
\newcommand{\supp}[1]{\operatorname{supp}\left(#1\right)}
\newcommand{\infl}[1]{\operatorname{infl}\left(#1\right)}
\newcommand{\supppi}[2]{\operatorname{supp}_{#1}\left(#2\right)}
\newcommand{\inflpi}[2]{\operatorname{infl}_{#1}\left(#2\right)}
\newcommand{\linspan}{\operatorname{span}}
\newcommand{\bias}[2]{\operatorname{Bias}_{#1}\left(#2\right)}
\newcommand{\hermite}[2]{\operatorname{He}_{#1}\left(#2\right)}

\newcommand{\order}[1]{O\left(#1\right)}
\newcommand{\bigtheta}[1]{\Theta\left(#1\right)}
\newcommand{\littleo}[1]{o\left(#1\right)}
\newcommand{\bigomega}[1]{\Omega\left(#1\right)}
\newcommand{\littleomega}[1]{\omega\left(#1\right)}

\newcommand{\bernoulli}[1]{\operatorname{Bernoulli}\left(#1\right)}

\newcommand{\degree}[1]{\operatorname{deg}\left(#1\right)}

% helpful macros
\makeatletter % sequence of equations
\newcommand{\ceqref}[1]{%
  \textup{(}%
  \@tempswatrue
  \@for\eq:=#1\do{%
    \if@tempswa\@tempswafalse\else, \fi
    \textup{\ref{\eq}}%
  }%
  \textup{)}%
}
\makeatother

% causal estimands
\newcommand{\nodex}[1]{\begin{tikzpicture}[baseline={([yshift=-.6ex]current bounding box.center)}]\SetVertexNoLabel\GraphInit[vstyle=Classic]\SetUpVertex[FillColor=white, MinSize=12pt]\grEmptyPath[prefix=i]{1}\AssignVertexLabel[size=\scriptsize]{i}{$#1$}
\end{tikzpicture}}
\newcommand{\nodet}[1]{\begin{tikzpicture}[baseline={([yshift=-.6ex]current bounding box.center)}]\SetVertexNoLabel\GraphInit[vstyle=Classic]\SetUpVertex[FillColor=green!20, MinSize=12pt]\grEmptyPath[prefix=i]{1}\AssignVertexLabel[size=\scriptsize]{i}{$#1$}
\end{tikzpicture}}
\newcommand{\nodec}[1]{\begin{tikzpicture}[baseline={([yshift=-.6ex]current bounding box.center)}]\SetVertexNoLabel\GraphInit[vstyle=Classic]\SetUpVertex[FillColor=red!20, MinSize=12pt]\grEmptyPath[prefix=i]{1}\AssignVertexLabel[size=\scriptsize]{i}{$#1$}
\end{tikzpicture}}
\newcommand{\nodext}[1]{\begin{tikzpicture}[baseline={([yshift=-.6ex]current bounding box.center)}]\SetVertexNoLabel\GraphInit[vstyle=Classic]\SetUpVertex[FillColor=white, MinSize=12pt]\grPath[prefix=i, RA=1]{2}\AssignVertexLabel[size=\scriptsize]{i}{$#1$}\SetUpVertex[FillColor=green!20, MinSize=12pt]\Vertex[Node]{i1}\end{tikzpicture}}
\newcommand{\nodexc}[1]{\begin{tikzpicture}[baseline={([yshift=-.6ex]current bounding box.center)}]\SetVertexNoLabel\GraphInit[vstyle=Classic]\SetUpVertex[FillColor=white, MinSize=12pt]\grPath[prefix=i, RA=1]{2}\AssignVertexLabel[size=\scriptsize]{i}{$#1$}\SetUpVertex[FillColor=red!20, MinSize=12pt]\Vertex[Node]{i1}\end{tikzpicture}}
\newcommand{\nodett}[1]{\begin{tikzpicture}[baseline={([yshift=-.6ex]current bounding box.center)}]\SetVertexNoLabel\GraphInit[vstyle=Classic]\SetUpVertex[FillColor=green!20, MinSize=12pt]\grPath[prefix=i, RA=1]{2}\AssignVertexLabel[size=\scriptsize]{i}{$#1$}\SetUpVertex[FillColor=green!20, MinSize=12pt]\Vertex[Node]{i1}\end{tikzpicture}}
\newcommand{\nodetc}[1]{\begin{tikzpicture}[baseline={([yshift=-.6ex]current bounding box.center)}]\SetVertexNoLabel\GraphInit[vstyle=Classic]\SetUpVertex[FillColor=green!20, MinSize=12pt]\grPath[prefix=i, RA=1]{2}\AssignVertexLabel[size=\scriptsize]{i}{$#1$}\SetUpVertex[FillColor=red!20, MinSize=12pt]\Vertex[Node]{i1}\end{tikzpicture}}
\newcommand{\nodect}[1]{\begin{tikzpicture}[baseline={([yshift=-.6ex]current bounding box.center)}]\SetVertexNoLabel\GraphInit[vstyle=Classic]\SetUpVertex[FillColor=red!20, MinSize=12pt]\grPath[prefix=i, RA=1]{2}\AssignVertexLabel[size=\scriptsize]{i}{$#1$}\SetUpVertex[FillColor=green!20, MinSize=12pt]\Vertex[Node]{i1}\end{tikzpicture}}
\newcommand{\nodecc}[1]{\begin{tikzpicture}[baseline={([yshift=-.6ex]current bounding box.center)}]\SetVertexNoLabel\GraphInit[vstyle=Classic]\SetUpVertex[FillColor=red!20, MinSize=12pt]\grPath[prefix=i, RA=1]{2}\AssignVertexLabel[size=\scriptsize]{i}{$#1$}\SetUpVertex[FillColor=red!20, MinSize=12pt]\Vertex[Node]{i1}\end{tikzpicture}}
\newcommand{\nodexcc}[1]{\begin{tikzpicture}[baseline={([yshift=-.6ex]current bounding box.center)}]\SetVertexNoLabel\GraphInit[vstyle=Classic]\SetUpVertex[FillColor=red!20, MinSize=12pt]\grEmptyCycle[prefix=i, RA=0.5]{3}\SetUpVertex[FillColor=white, MinSize=12pt]\Vertex[Node]{i2}\Edge(i2)(i1)\Edge(i2)(i0)\AssignVertexLabel[size=\scriptsize]{i}{,,$#1$}\end{tikzpicture}}
\newcommand{\nodexct}[1]{\begin{tikzpicture}[baseline={([yshift=-.6ex]current bounding box.center)}]\SetVertexNoLabel\GraphInit[vstyle=Classic]\SetUpVertex[FillColor=red!20, MinSize=12pt]\grEmptyCycle[prefix=i, RA=0.5]{3}\SetUpVertex[FillColor=white, MinSize=12pt]\Vertex[Node]{i2}\Edge(i2)(i1)\Edge(i2)(i1)\Edge(i2)(i0)\AssignVertexLabel[size=\scriptsize]{i}{,,$#1$}\SetUpVertex[FillColor=green!20, MinSize=12pt]\Vertex[Node]{i0}\end{tikzpicture}}

\newcommand{\eate}[1]{\operatorname{EATE}\left(#1\right)}
\newcommand{\eite}[1]{\operatorname{EAIE}\left(#1\right)}
\newcommand{\eiet}[1]{\operatorname{EAIT}\left(#1\right)}
\newcommand{\eiec}[1]{\operatorname{EAIC}\left(#1\right)}
\newcommand{\eao}[1]{\operatorname{EAO}\left(#1\right)}
\newcommand{\eaoderiv}[2]{\operatorname{EAO}^{(#1)}\left(#2\right)}
\newcommand{\eaoest}[2]{\widehat{\operatorname{EAO}}_{#1}\left(#2\right)}
\begin{frontmatter}
\title{Off-policy Causal Estimation in Networks}
%\title{A sample article title with some additional note\thanksref{t1}}
\runtitle{Off-policy Causal Estimation in Networks}
\thankstext{T1}{We thank Elchanan Mossel for discussion. We are grateful to Christina Lee Yu, Christopher Harshaw, David Choi, and the participants of the CMU Statistics \& Data Science Seminar and 10\textsuperscript{th} Network Science in Economics Conference, for helpful comments. SL acknowledges the support of a Schmidt Science Fellowship.}

\begin{aug}
%%%%%%%%%%%%%%%%%%%%%%%%%%%%%%%%%%%%%%%%%%%%%%%
%% Only one address is permitted per author. %%
%% Only division, organization and e-mail is %%
%% included in the address.                  %%
%% Additional information can be included in %%
%% the Acknowledgments section if necessary. %%
%% ORCID can be inserted by command:         %%
%% \orcid{0000-0000-0000-0000}               %%
%%%%%%%%%%%%%%%%%%%%%%%%%%%%%%%%%%%%%%%%%%%%%%%
\author[A]{\fnms{Sahil}~\snm{Loomba}\ead[label=e1]{s.loomba18@imperial.ac.uk}\orcid{0000-0002-5043-9746}}
\and
\author[B]{\fnms{Dean}~\snm{Eckles}\ead[label=e2]{eckles@mit.edu}\orcid{0000-0001-8439-442X}}
%%%%%%%%%%%%%%%%%%%%%%%%%%%%%%%%%%%%%%%%%%%%%%
%% Addresses                                %%
%%%%%%%%%%%%%%%%%%%%%%%%%%%%%%%%%%%%%%%%%%%%%%
\address[A]{Department of Mathematics, Imperial College London, I-X Centre for AI in Science\printead[presep={ ,\ }]{e1}}

\address[B]{Sloan School of Management, Massachusetts Institute of Technology\printead[presep={,\ }]{e2}}
\end{aug}

\begin{abstract}
In the presence of interference, where the treatment assigned to one unit can affect the outcomes of others, many causal estimands depend on the treatment-assignment policy under which the experiment is conducted. This policy dependence creates a fundamental challenge for off-policy estimation, where the goal is to estimate causal quantities under a hypothetical intervention policy different from the one used to collect data. We study this problem of off-policy estimation of causal effects for heterogeneous Bernoulli policies. By representing exposure-weighted potential outcomes in the biased Fourier basis of the experimental design, we construct, for any prespecified Fourier subspace encoding the assumed interference structure, the unique minimum-$L^2$ weight that transports every function in that subspace. Global and local inverse-probability weights, linear-interference weights, and no-interference weights are special cases. The weight variance is a structured chi-square distance between the experiment and target policies. When the assumed interference structure is misspecified, the introduced bias couples the omitted outcome spectrum with the corresponding policy-shift coefficients, yielding a sharp robustness bound and a bias--variance trade-off. A Fourier-neighborhood-overlap condition gives consistency under structured interference, and we state a Doob-martingale central limit theorem for off-policy estimators. As the variance is not identified, we derive identifiable bounds and associated conservative estimators of the variance. Simulations illustrate these theoretical results for the design and analysis of experiments under network interference and design mismatch.
\end{abstract}

\begin{keyword}[class=MSC]
\kwd[Primary ]{62K99}
\kwd{62P20}
\kwd{62P25}
\kwd[; secondary ]{62G05}
\kwd{62G20}
\kwd{94D10}
\end{keyword}

\begin{keyword}
\kwd{Causal inference}
\kwd{networks}
\kwd{spillover effects}
\kwd{randomized experiments}
\kwd{off-policy estimation}
\kwd{transfer learning}
\kwd{Boolean functions}
\end{keyword}

\end{frontmatter}
%%%%%%%%%%%%%%%%%%%%%%%%%%%%%%%%%%%%%%%%%%%%%%
%% Please use \tableofcontents for articles %%
%% with 50 pages and more                   %%
%%%%%%%%%%%%%%%%%%%%%%%%%%%%%%%%%%%%%%%%%%%%%%
%\tableofcontents

\section{Introduction}\label{sec:intro}
Design-based causal inference is commonly developed under the assumption of no interference: one unit's treatment does not affect another unit's outcome \cite{neyman1990causalinference}. In this setting, the average treatment effect (ATE)---measuring the average change in the outcome of a unit when receiving the treatment or not---is often the main causal estimand of interest, and it does not depend on the experimental design. However, the no-interference assumption is often incorrect when units are connected, as in a network, and its relaxation yields a profusion of causal estimands whose true value depends on the design \cite{savje2021ateunknown}. For instance, the expected average treatment effect (EATE) and expected indirect effect (EAIE) capture the effect of flipping the treatment status of a unit on its own and another unit's outcome, averaged over the experimental design; these can be combined to yield an expected average overall effect of treating an additional unit, given the experimental design \cite{hu2022interference}. Moreover, under interference, widely-used summaries of spillovers will generally not identify optimal intervention policies that most improve the expected average outcome (EAO) \cite{loomba2025policyrelevance}. There is thus substantial interest in methods for estimating policy-indexed estimands, like the EAO, that are relevant for policy choice \cite[e.g.,][]{chin2022evaluating,viviano2025policy}.

%\subsection*{Off-policy challenge}
This dependence of the estimand on the policy used to collect the data (i.e. the experimental design) creates an off-policy problem. Say the experimenter collects outcomes under a known \emph{design policy} whereby units are assigned treatment with probabilities given by $\vect{\pi}$, but the analyst wants to estimate causal effects under a new target policy $\vect{\pi}'$. As we show, without restrictions on how treatment assignments affect outcomes, transporting expectations from the design policy to the target policy requires balancing the full assignment distribution: when the design policy has full support, the resulting weight is the full inverse-probability weight. Weighting observed outcomes by it yields an unbiased estimator, but its variance generally grows exponentially with the number of units whose treatments enter the outcome. Hence, useful off-policy evaluation requires imposing structure, which could be a fully specified exposure-response model, or just restrictions on which units' treatments and interaction orders can affect the potential outcomes of a given unit.

%\subsection*{Fourier basis}
Here, we encode that structure in the Fourier spectrum of the outcome function itself. This differs from taking a network and a metric on that network as primitive, as in distance-decay restrictions such as approximate neighborhood interference \cite{leung2022ani}. Fourier support determines which units' treatments matter, while Fourier degree determines the interaction orders through which they matter. Thus, it can represent low effective complexity even in a dense network, and it automatically represents cancellations that a purely network-distance description may obscure. Network restrictions remain valuable as providing conditions for spectral structure, perhaps through some underlying dynamical process of network spillovers, but the Fourier spectrum targets the complexity relevant to policy transport directly.

%\subsection*{Our contributions}
We study this problem for independent Bernoulli designs by representing potential outcomes and exposure indicators as functions on the Boolean cube $\booln$. The corresponding biased Fourier basis makes the policy shift a linear functional of the response spectrum. This yields four contributions. First, for any prespecified Fourier subspace encoding the interference assumptions, we derive the unique minimum-$L^2$ weight that transports expectations from $\vect{\pi}$ to $\vect{\pi}'$ for every function in that subspace. We identify its norm with a structured chi-square policy distance and characterize the exact bias introduced when the assumed subspace omits true Fourier interactions. Second, we show how the variance of this weight depends jointly on the size and order of influence sets, and give a consistency condition that also accounts for overlap between units' influential sets. Third, we establish generic nonidentifiability of randomization variance and derive identifiable covariance bounds with unbiased estimators. Finally, we give a Doob-martingale central limit theorem and simulations are used to illustrate the theoretical results.

%\subsection*{Overview} 
Our use of exposure indicators follows the distinction between using exposure functions to define a causal contrast and to restrict interference, and we use them only for the former \cite{aronow2017interference,savje2023exposure}. Our focus is on transporting such quantities, and expected average outcomes themselves, between policies. Secs. \ref{sec:setup} through \ref{sec:variance} develop the setup, estimator, policy-distance and misspecification analysis, efficiency theory, and conservative variance bounds. Sec. \ref{sec:eao_curve} connects policy curves of the EAO to familiar networked causal contrasts and presents the empirical illustration. Proofs of all results are in Appendix \ref{sec:proofs}, and Appendix \ref{sec:doobclt} develops the Doob-martingale limit argument.

\section{Setup and policy-indexed causal targets}\label{sec:setup}
Consider a finite population of $n$ units. For each unit $i$, the fixed potential-outcome function $y_i:\booln\to\real$ maps the full binary treatment assignment vector $\vect{z}$ to an outcome. The design policy satisfies:
\begin{equation*}
    \vect{Z}\sim\mathbb{P}_{\vect{\pi}},\qquad
    \probpi{\vect{\pi}}{\vect{z}}\triangleq\prod_{i=1}^n\pi_i^{z_i}(1-\pi_i)^{1-z_i},
    \qquad \vect{\pi}\in(0,1)^n,
\end{equation*}
and all randomness is induced by this design. Throughout, lower-case letters denote fixed functions of an assignment and the corresponding upper-case letters denote their realized random variables. Thus $F\triangleq f(\vect{Z})$ for a generic function $f$, and the observed outcome is $Y_i\triangleq y_i(\vect{Z})$. The results below are first stated for a generic Boolean function $f:\booln\to\real$. We bring them back to causal inference by taking $f$ to be a potential-outcome function or an exposure-weighted potential-outcome function, and then averaging the resulting unit-level quantities. This keeps the Boolean-function arguments general while making their implications for EAO and causal contrasts explicit.

Let $t_i^+,t_i^-:\booln\to\bool$ be mutually exclusive exposure indicator functions, that define the positive and negative conditions being contrasted, i.e. they do not by themselves restrict how $y_i$ depends on treatment \cite{savje2023exposure}. For any target product policy $\vect{\pi}'\in[0,1]^n$, define:
\begin{equation*}
    p_i^+(\vect{\pi}')\triangleq\expectpi{\vect{\pi}'}{T_i^+},\qquad p_i^-(\vect{\pi}')\triangleq\expectpi{\vect{\pi}'}{T_i^-}.
\end{equation*}
Whenever both probabilities are positive, the policy-indexed contrast is defined as:
\begin{equation}\label{eq:policy_contrast}
    \delta_{\vect{\pi}'}\triangleq\frac{1}{n}\sum_{i=1}^n\left(\frac{\expectpi{\vect{\pi}'}{Y_iT_i^+}}{p_i^+(\vect{\pi}')}-\frac{\expectpi{\vect{\pi}'}{Y_iT_i^-}}{p_i^-(\vect{\pi}')}\right).
\end{equation}
This includes expected direct and exposure contrasts. The expected average outcome is:
\begin{equation*}
    \eao{\vect{\pi}'}
    \triangleq\frac{1}{n}\sum_{i=1}^n\expectpi{\vect{\pi}'}{Y_i}
\end{equation*}
is a one-term target and does not require a second exposure indicator. At the experimental policy, the Horvitz--Thompson estimator \cite{horvitz1952sampling}:
\begin{equation*}
    \widehat{\delta}_{\vect{\pi}}\triangleq\frac{1}{n}\sum_{i=1}^nY_i\left(\frac{T_i^+}{p_i^+(\vect{\pi})}-\frac{T_i^-}{p_i^-(\vect{\pi})}\right)
\end{equation*}
is unbiased for $\delta_{\vect{\pi}}$. Our objective is to estimate $\delta_{\vect{\pi}'}$ when $\vect{\pi}'\ne\vect{\pi}$ from the same realized experiment $\vect{Z}$.

\section{Spectral transport across treatment policies}\label{sec:spectral_transport}
For a function $f:\booln\to\real$, let
\begin{equation*}
    \chi_{\vect{\pi}}^V(\vect{z})\triangleq\prod_{j\in V}\frac{z_j-\pi_j}{\sqrt{\pi_j(1-\pi_j)}},\qquad V\subseteq[n],
\end{equation*}
be $\vect{\pi}$-biased Fourier characters with $\chi_{\vect{\pi}}^\emptyset\triangleq1$. These functions form an orthonormal basis under the design policy $\mathbb{P}_{\vect{\pi}}$, i.e. $\expectpi{\vect{\pi}}{\chi_{\vect{\pi}}^V(\vect{Z})\chi_{\vect{\pi}}^W(\vect{Z})}=1$ if $V=W$ and $0$ otherwise, yielding
\begin{equation}\label{eq:boolfourier}
    f(\vect{z})=\sum_{V\subseteq[n]}\hat{f}_{\vect{\pi}}(V)\,\chi_{\vect{\pi}}^V(\vect{z})
\end{equation}
as the corresponding product-distribution analogue of the usual Fourier expansion of Boolean functions \cite{odonnell2014boolean}. We write $\supppi{\vect{\pi}}{f}\triangleq\set{V\subseteq[n]}{\hat{f}_{\vect{\pi}}(V)\ne 0}$ for the Fourier support of $f$ under a policy $\vect{\pi}$. For any square-integrable function $g$, define its $L^2$ norm under that policy:
\begin{equation*}
    \normpi{\vect{\pi}}{g}\triangleq\expectpi{\vect{\pi}}{G^2}^{1/2}.
\end{equation*}

Define the policy-shift coefficients:
\begin{equation*}
    \Delta_{\vect{\pi}\vect{\pi}'}^j\triangleq\frac{\pi_j'-\pi_j}{\sqrt{\pi_j(1-\pi_j)}},\qquad\Delta_{\vect{\pi}\vect{\pi}'}^V\triangleq\prod_{j\in V}\Delta_{\vect{\pi}\vect{\pi}'}^j,
\end{equation*}
with $\Delta_{\vect{\pi}\vect{\pi}'}^\emptyset\triangleq 1$. The two policy subscripts denote the design and target policies. Independence under the target product policy gives:
\begin{equation}
    \label{eq:target_character_mean}
    \expectpi{\vect{\pi}'}{\chi_{\vect{\pi}}^V(\vect{Z})}=\Delta_{\vect{\pi}\vect{\pi}'}^V.
\end{equation}
In other words, $\Delta_{\vect{\pi}\vect{\pi}'}^V$ is the expectation of each design-policy basis function under the target policy. From \eqref{eq:boolfourier} we immediately see how this could be applied to policy transport:
\begin{equation*}
    \expectpi{\vect{\pi}'}{F}=\sum_{V\subseteq[n]}\hat{f}_{\vect{\pi}}(V)\Delta_{\vect{\pi}\vect{\pi}'}^V.
\end{equation*}
That is, the vector $\set{\Delta_{\vect{\pi}\vect{\pi}'}^V}{V\subseteq[n]}$ gives the coefficients of the target-expectation functional in the design-policy basis, and can be used for policy transport.

In our formalism, an interference assumption or mechanism restriction specifies a support set $\mathcal{S}\subseteq\powerset{[n]}$ containing $\emptyset$, where $\powerset{\cdot}$ denotes the power set. Define a support restriction's associated Fourier subspace:
\begin{equation*}
    \mathcal{H}_{\mathcal{S}}\triangleq\linspan\set{\chi_{\vect{\pi}}^V}{V\in\mathcal{S}}=\set{f}{\supppi{\vect{\pi}}{f}\subseteq\mathcal{S}}.
\end{equation*}
The definition of $\mathcal{H}_{\mathcal{S}}$ is descriptive rather than an assumption, since $\mathcal{S}$ simply lists the Fourier components that the analyst allows for the policy transport. Define an $\mathcal{S}$-restricted weight whose $\vect{\pi}$-biased Fourier coefficients are exactly the corresponding policy-shift coefficients:
\begin{equation}
    \label{eq:spectral_transport_weight}
    w_{\vect{\pi}\vect{\pi}'}^{\mathcal{S}}(\vect{z})\triangleq\sum_{V\in\mathcal{S}}\Delta_{\vect{\pi}\vect{\pi}'}^V\,\chi_{\vect{\pi}}^V(\vect{z}).
\end{equation}

\begin{proposition}[Spectral transport]\label{prop:spectral_transport}
If $f\in\mathcal{H}_{\mathcal{S}}$, then:
\begin{equation*}
    \expectpi{\vect{\pi}}{F W_{\vect{\pi}\vect{\pi}'}^{\mathcal{S}}}=\expectpi{\vect{\pi}'}{F}.
\end{equation*}
Among all square-integrable weighting functions $u$ for which $\expectpi{\vect{\pi}}{FU}=\expectpi{\vect{\pi}'}{F}$ for every $f\in\mathcal{H}_{\mathcal{S}}$, $w_{\vect{\pi}\vect{\pi}'}^{\mathcal{S}}$ uniquely minimizes $\normpi{\vect{\pi}}{u}$. The variance of its realization is:
\begin{equation}
    \label{eq:spectral_weight_variance}
    \variancepi{\vect{\pi}}{W_{\vect{\pi}\vect{\pi}'}^{\mathcal{S}}}=\sum_{V\in\mathcal{S}\setminus\{\emptyset\}}\left(\Delta_{\vect{\pi}\vect{\pi}'}^V\right)^2.
\end{equation}
\end{proposition}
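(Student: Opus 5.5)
The argument works entirely in the Hilbert space $L^2(\mathbb{P}_{\vect{\pi}})$, using the orthonormality of the characters $\chi_{\vect{\pi}}^V$ and the target-mean identity \eqref{eq:target_character_mean}. It has three steps: unbiasedness on $\mathcal{H}_{\mathcal{S}}$, the minimum-norm property as an orthogonal projection, and the variance formula as a Parseval computation.

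\emph{Transport.} Take $f\in\mathcal{H}_{\mathcal{S}}$, so that $f=\sum_{V\in\mathcal{S}}\hat f_{\vect{\pi}}(V)\chi_{\vect{\pi}}^V$. Expanding the product $FW_{\vect{\pi}\vect{\pi}'}^{\mathcal{S}}$ with \eqref{eq:spectral_transport_weight} and applying orthonormality gives
\[
\expectpi{\vect{\pi}}{FW^{\mathcal{S}}_{\vect{\pi}\vect{\pi}'}}=\sum_{V\in\mathcal{S}}\hat f_{\vect{\pi}}(V)\Delta^V_{\vect{\pi}\vect{\pi}'}.
\]
By the displayed expansion $\expectpi{\vect{\pi}'}{F}=\sum_V\hat f_{\vect{\pi}}(V)\Delta^V_{\vect{\pi}\vect{\pi}'}$, whose sum is restricted to $\mathcal{S}$ because $\supppi{\vect{\pi}}{f}\subseteq\mathcal{S}$, this equals $\expectpi{\vect{\pi}'}{F}$.

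\emph{Minimum norm and uniqueness.} Let $u$ be any admissible square-integrable weight. Testing the constraint on each basis function $f=\chi_{\vect{\pi}}^V$ with $V\in\mathcal{S}$ gives
\[
\hat u_{\vect{\pi}}(V)=\expectpi{\vect{\pi}}{\chi_{\vect{\pi}}^V U}=\expectpi{\vect{\pi}'}{\chi^V_{\vect{\pi}}}=\Delta^V_{\vect{\pi}\vect{\pi}'}.
\]
So every admissible $u$ has the same coefficients as $w^{\mathcal{S}}_{\vect{\pi}\vect{\pi}'}$ on $\mathcal{S}$, and its coefficients off $\mathcal{S}$ are unconstrained. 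Because $\booln$ is finite, $L^2(\mathbb{P}_{\vect{\pi}})$ is finite dimensional and Parseval holds, which yields
\[
\normpi{\vect{\pi}}{u}^2=\sum_{V\in\mathcal{S}}(\Delta^V_{\vect{\pi}\vect{\pi}'})^2+\sum_{V\notin\mathcal{S}}\hat u_{\vect{\pi}}(V)^2.
\]
This is minimized exactly when $\hat u_{\vect{\pi}}(V)=0$ for all $V\notin\mathcal{S}$, that is, when $u=w^{\mathcal{S}}_{\vect{\pi}\vect{\pi}'}$; the strict convexity of the squared terms gives uniqueness. Equivalently, $w^{\mathcal{S}}_{\vect{\pi}\vect{\pi}'}$ is the orthogonal projection of any admissible $u$ onto $\mathcal{H}_{\mathcal{S}}$, and the admissible set is $w+\mathcal{H}_{\mathcal{S}}^\perp$.

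\emph{Variance.} Since $\emptyset\in\mathcal{S}$ and $\Delta^\emptyset=1$, the mean is $\expectpi{\vect{\pi}}{W}=\hat w(\emptyset)=1$. Parseval gives $\expectpi{\vect{\pi}}{W^2}=\sum_{V\in\mathcal{S}}(\Delta^V)^2$, and subtracting $1$ yields \eqref{eq:spectral_weight_variance}.

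No step is a real obstacle. The only point needing care is that the constraint must be tested on the full basis of $\mathcal{H}_{\mathcal{S}}$, which is legitimate because $\mathcal{H}_{\mathcal{S}}$ is exactly the span of $\{\chi_{\vect{\pi}}^V: V\in\mathcal{S}\}$. The condition $\vect{\pi}\in(0,1)^n$ is needed for the characters to be well defined, whereas $\vect{\pi}'$ may lie on the boundary of $[0,1]^n$.
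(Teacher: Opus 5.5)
Your proposal is correct and follows essentially the same route as the paper: orthonormality gives transport, testing the constraint on each $\chi_{\vect{\pi}}^V$ with $V\in\mathcal{S}$ forces $\hat{u}_{\vect{\pi}}(V)=\Delta_{\vect{\pi}\vect{\pi}'}^V$, Parseval shows the unconstrained coefficients off $\mathcal{S}$ must vanish at the unique minimum, and subtracting the unit mean gives the variance. Your explicit description of the admissible set as $w_{\vect{\pi}\vect{\pi}'}^{\mathcal{S}}+\mathcal{H}_{\mathcal{S}}^\perp$ states in the proof itself the projection interpretation that the paper gives only in the surrounding text.
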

\noindent\textbf{Interpretation.} For every function whose Fourier support lies in $\mathcal S$, weighting under the design policy by $W_{\vect{\pi}\vect{\pi}'}^{\mathcal S}$ exactly yields its target-policy mean. Among all weights that achieve this simultaneously for every such function, $W_{\vect{\pi}\vect{\pi}'}^{\mathcal S}$ has the smallest $L^2(\mathbb{P}_{\vect{\pi}})$ norm and hence also has the smallest variance under the design policy. Thus, the weight precisely captures the retained Fourier directions without spending variance on directions excluded from $\mathcal{S}$.

When $\mathcal{S}=\powerset{\Gamma}$ for a set of units $\Gamma\subseteq[n]$, the sum in \eqref{eq:spectral_transport_weight} factorizes as an inverse-probability weight:
\begin{equation*}
    w_{\vect{\pi}\vect{\pi}'}^{\powerset{\Gamma}}(\vect{z})=\prod_{i\in\Gamma} \left(\frac{\pi_i'}{\pi_i}\right)^{z_i}\left(\frac{1-\pi_i'}{1-\pi_i}\right)^{1-z_i}=\frac{\probpi{\vect{\pi}'}{\vect{z};\Gamma}}{\probpi{\vect{\pi}}{\vect{z};\Gamma}},
\end{equation*}
where we use $\probpi{\vect{\pi}}{\cdot;\Gamma}$ for the probability of a treatment assignment restricted to the units in $\Gamma$, with $\probpi{\vect{\pi}}{\cdot;\emptyset}\triangleq 1$. As a special case, consider the \emph{global} inverse-probability weight, arising from considering the probability of the entire treatment assignment vector $\vect{z}$ under the design policy. This arises in this construction when $\Gamma=[n]$, which gives a useful change-of-measure interpretation to \eqref{eq:target_character_mean}: $\Delta_{\vect{\pi}\vect{\pi}'}^V=\hat{w}_{\vect{\pi}\vect{\pi}',\vect{\pi}}^{\powerset{[n]}}(V).$ That is, the expectation of each design-policy basis function under the target policy is exactly the corresponding Fourier coefficient of the full inverse-probability weight. Since an $\mathcal{S}$-restricted weight picks out a subset of these Fourier coefficients, it is best seen as an orthogonal projection of the full inverse-probability weight onto the subspace $\mathcal{H}_\mathcal{S}$. More formally, for any $g$, define its orthogonal projection onto $\mathcal{H}_\mathcal{S}$ by:
\begin{equation}
    \label{eq:fourier_projection}
    P_{\mathcal{S}}g(\vect{z})\triangleq\sum_{V\in\mathcal{S}}\hat{g}_{\vect{\pi}}(V)\,\chi_{\vect{\pi}}^V(\vect{z}).
\end{equation}
Then:
\begin{equation*}
    w_{\vect{\pi}\vect{\pi}'}^{\mathcal{S}}=P_{\mathcal{S}}w_{\vect{\pi}\vect{\pi}'}^{\powerset{[n]}}.
\end{equation*}
When $\mathcal{S}=\powerset{[n]}$, the balance conditions apply to every source-policy Fourier character. Because these characters form a basis for all functions on $\booln$, satisfying these conditions transports the expectation of every function from $\vect{\pi}$ to $\vect{\pi}'$.

We remark that for a general support set $\mathcal{S}$, $W_{\vect{\pi}\vect{\pi}'}^{\mathcal{S}}$ can be negative. This does not give it an interpretation as a probability ratio, rather it is an orthogonal projection of a nonnegative probability ratio, and projection onto a linear subspace need not preserve positivity. Its interpretation is instead as a signed balancing weight, since:
\begin{equation}
    \label{eq:spectral_balance_constraints}
    \expectpi{\vect{\pi}}{W_{\vect{\pi}\vect{\pi}'}^{\mathcal{S}}\chi_{\vect{\pi}}^V(\vect{Z})}=\expectpi{\vect{\pi}'}{\chi_{\vect{\pi}}^V(\vect{Z})},
\end{equation}
i.e. they encode the correction required to balance the retained Fourier coefficients after the remaining coefficients have been discarded.

We return now from the transport of general Boolean functions to that of exposure-conditional functions: $x_i^+(\vect{z})\triangleq y_i(\vect{z})\,t_i^+(\vect{z})$ and $x_i^-(\vect{z})\triangleq y_i(\vect{z})\,t_i^-(\vect{z})$. Suppose that sets $\mathcal{S}_i^+$ and $\mathcal{S}_i^-$ contain their respective Fourier supports and that $ p_i^+(\vect{\pi}'), p_i^-(\vect{\pi}')>0$. Then:
\begin{equation}
    \label{eq:offpolicy_estimator}
    \widehat{\delta}_{\vect{\pi}'}=\frac{1}{n}\sum_{i=1}^nY_i\left(\frac{T_i^+W_{\vect{\pi}\vect{\pi}'}^{\mathcal{S}_i^+}}{p_i^+(\vect{\pi}')}-\frac{T_i^-W_{\vect{\pi}\vect{\pi}'}^{\mathcal{S}_i^-}}{p_i^-(\vect{\pi}')}\right)
\end{equation}
is unbiased for $\delta_{\vect{\pi}'}$. The EAO is the one-term special case with $t_i^+\triangleq 1$ and $t_i^-$ omitted.

\section{Policy distance and misspecified interference}\label{sec:misspecification}

The coefficients $\Delta_{\vect{\pi}\vect{\pi}'}^j$ quantify the unit-level shift from the design policy to the target policy. They can be seen as encoding a divergence between the two policies:
\begin{equation*}
    \left(\Delta_{\vect{\pi}\vect{\pi}'}^i\right)^2=\frac{(\pi_i'-\pi_i)^2}{\pi_i(1-\pi_i)}=\chisqdiv{\bernoulli{\pi_i'}}{\bernoulli{\pi_i}},
\end{equation*}
where the chi-squared divergence is defined by $\chisqdiv{Q}{P}\triangleq\expectpi{P}{\left(\frac{dQ}{dP}-1\right)^2}$. For product policies,
\begin{equation*}
    \chisqdiv{\mathbb{P}_{\vect{\pi}'}}{\mathbb{P}_{\vect{\pi}}}=\prod_{i=1}^n\left(1+\left(\Delta_{\vect{\pi}\vect{\pi}'}^i\right)^2\right)-1.
\end{equation*}
For any Fourier index set $\mathcal{I}\subseteq\powerset{[n]}$, define:
\begin{equation*}
    d_{\mathcal{I}}^2(\vect{\pi}',\vect{\pi})\triangleq\sum_{V\in\mathcal{I}\setminus\{\emptyset\}}\left(\Delta_{\vect{\pi}\vect{\pi}'}^V\right)^2
\end{equation*}
as the \emph{structured} chi-squared policy distance along the Fourier directions in $\mathcal{I}$. In particular,
\begin{equation*}
    d_{\mathcal{S}}^2(\vect{\pi}',\vect{\pi})=\normpi{\vect{\pi}}{w_{\vect{\pi}\vect{\pi}'}^{\mathcal{S}}-1}^2=\variancepi{\vect{\pi}}{W_{\vect{\pi}\vect{\pi}'}^{\mathcal{S}}}.
\end{equation*}
It encodes the squared $L^2(\mathbb{P}_{\vect{\pi}})$ norm of the non-constant part of the projected inverse-probability weight. Only policy shifts along Fourier interactions retained in $\mathcal{S}$ contribute to it. We note that, as with chi-squared divergence, the word ``distance'' is descriptive here, and not used in the sense of a metric. Setting $\mathcal{I}=\mathcal{S}^c$, where complements are taken relative to $\powerset{[n]}$, measures the shift along the omitted directions, so the full product-policy discrepancy splits orthogonally as:
\begin{equation*}
    \chisqdiv{\mathbb{P}_{\vect{\pi}'}}{\mathbb{P}_{\vect{\pi}}}=d_{\mathcal{S}}^2(\vect{\pi}',\vect{\pi})+d_{\mathcal{S}^c}^2(\vect{\pi}',\vect{\pi}).
\end{equation*}
The two terms correspond to policy-shift directions retained and omitted by the assumed Fourier subspace. The same coefficients also determine the sensitivity to a misspecified interference assumption. Recall from \eqref{eq:fourier_projection} that $P_{\mathcal{S}}f$ retains precisely the Fourier terms indexed by $\mathcal{S}$, and therefore $f-P_{\mathcal{S}}f$ is the part of the true function omitted by the analyst.

\begin{proposition}[Spectral misspecification bias]
\label{prop:misspecification_bias}
For any square-integrable $f$,
\begin{equation}
    \begin{split}
    \bias{\mathcal{S}}{f;\vect{\pi},\vect{\pi}'}&\triangleq\expectpi{\vect{\pi}}{FW_{\vect{\pi}\vect{\pi}'}^{\mathcal{S}}}-\expectpi{\vect{\pi}'}{F}\\
    &=-\sum_{V\notin\mathcal{S}}\hat{f}_{\vect{\pi}}(V)\Delta_{\vect{\pi}\vect{\pi}'}^V.
    \end{split}
    \label{eq:exact_misspecification_bias}
\end{equation}
Moreover,
\begin{equation}
    \abs{\bias{\mathcal{S}}{f;\vect{\pi},\vect{\pi}'}}\le\normpi{\vect{\pi}}{f-P_{\mathcal{S}}f}\cdot d_{\mathcal{S}^c}(\vect{\pi}',\vect{\pi}).
    \label{eq:misspecification_bias_bound}
\end{equation}
Consequently, for every $c>0$, the largest possible absolute bias among functions satisfying $\normpi{\vect{\pi}}{f-P_{\mathcal{S}}f}\le c$ is exactly $c\cdot d_{\mathcal{S}^c}(\vect{\pi}',\vect{\pi})$.
\end{proposition}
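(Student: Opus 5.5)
The plan is to compute both expectations in the $\vect{\pi}$-biased Fourier basis and subtract. By \eqref{eq:boolfourier} and \eqref{eq:target_character_mean}, $\expectpi{\vect{\pi}'}{F}=\sum_{V\subseteq[n]}\hat{f}_{\vect{\pi}}(V)\Delta_{\vect{\pi}\vect{\pi}'}^V$. For the weighted design-policy expectation, we expand both $f$ and $w_{\vect{\pi}\vect{\pi}'}^{\mathcal{S}}$ (from \eqref{eq:spectral_transport_weight}) in the orthonormal characters. Parseval under $\mathbb{P}_{\vect{\pi}}$ then gives $\expectpi{\vect{\pi}}{FW_{\vect{\pi}\vect{\pi}'}^{\mathcal{S}}}=\sum_{V\in\mathcal{S}}\hat{f}_{\vect{\pi}}(V)\Delta_{\vect{\pi}\vect{\pi}'}^V$. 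Equivalently, the first part of Proposition~\ref{prop:spectral_transport} applied to $P_{\mathcal{S}}f\in\mathcal{H}_{\mathcal{S}}$, together with orthogonality of $f-P_{\mathcal{S}}f$ to $w^{\mathcal{S}}_{\vect{\pi}\vect{\pi}'}\in\mathcal{H}_{\mathcal{S}}$, yields the same identity. Subtracting the two expressions leaves exactly $-\sum_{V\notin\mathcal{S}}\hat{f}_{\vect{\pi}}(V)\Delta_{\vect{\pi}\vect{\pi}'}^V$, which is \eqref{eq:exact_misspecification_bias}. Since the domain is finite, all sums are finite and no convergence issues arise.

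For the bound \eqref{eq:misspecification_bias_bound}, we apply Cauchy--Schwarz over the index set $\mathcal{S}^c$. This gives $|\sum_{V\notin\mathcal{S}}\hat{f}_{\vect{\pi}}(V)\Delta^V|\le(\sum_{V\notin\mathcal{S}}\hat{f}_{\vect{\pi}}(V)^2)^{1/2}(\sum_{V\notin\mathcal{S}}(\Delta^V)^2)^{1/2}$. The first factor equals $\normpi{\vect{\pi}}{f-P_{\mathcal{S}}f}$ by Parseval, because $f-P_{\mathcal{S}}f$ has coefficients $\hat f_{\vect{\pi}}(V)$ for $V\notin\mathcal{S}$ and zero otherwise. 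The second factor is $d_{\mathcal{S}^c}(\vect{\pi}',\vect{\pi})$. Here I use that $\emptyset\in\mathcal{S}$, so $\emptyset\notin\mathcal{S}^c$, and removing $\emptyset$ in the definition of $d_{\mathcal{S}^c}$ changes nothing.

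For sharpness, I will exhibit the equality case of Cauchy--Schwarz. If $d_{\mathcal{S}^c}=0$, the bound gives zero bias for every $f$, so the maximum is $0=c\cdot 0$, attained trivially. Otherwise, define
$f^\star\triangleq -\frac{c}{d_{\mathcal{S}^c}(\vect{\pi}',\vect{\pi})}\sum_{V\notin\mathcal{S}}\Delta_{\vect{\pi}\vect{\pi}'}^V\chi_{\vect{\pi}}^V$.
Then $\normpi{\vect{\pi}}{f^\star-P_{\mathcal{S}}f^\star}=c$, and by \eqref{eq:exact_misspecification_bias} the bias equals $\frac{c}{d_{\mathcal{S}^c}}\sum_{V\notin\mathcal{S}}(\Delta^V)^2=c\cdot d_{\mathcal{S}^c}$. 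Adding any element of $\mathcal{H}_{\mathcal{S}}$ does not change this, so the supremum is attained and equals the bound.

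None of these steps is a real obstacle. The only care needed is bookkeeping: the normalization of the maximizer, and the degenerate case $d_{\mathcal{S}^c}=0$, which occurs for example when $\pi'_j=\pi_j$ on every unit appearing in omitted sets.
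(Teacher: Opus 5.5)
Your proposal is correct and follows the paper's proof essentially step for step. You expand both expectations in the $\vect{\pi}$-biased basis and subtract, apply Cauchy--Schwarz over $\mathcal{S}^c$ (using Parseval to identify the first factor with $\normpi{\vect{\pi}}{f-P_{\mathcal{S}}f}$), and attain the bound by aligning the omitted coefficients with $\set{\Delta_{\vect{\pi}\vect{\pi}'}^V}{V\notin\mathcal{S}}$ at norm $c$, treating the case $d_{\mathcal{S}^c}=0$ separately; your explicit sign and normalization for $f^\star$ just make precise what the paper states in words.
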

\noindent\textbf{Interpretation.} The bias is the inner product of two omitted-spectrum vectors: the Fourier coefficients of the true response and the design-to-target policy shifts. Support containment is therefore sufficient but not necessary for unbiasedness: omitted terms can cancel, and any interaction containing only units with $\pi_i'=\pi_i$ makes no contribution. At the on-policy point all nonempty $\Delta_{\vect{\pi}\vect{\pi}'}^V$ vanish. Under a homogeneous shift from $p$ to $q$, define $\epsilon_{pq}\triangleq\frac{\abs{q-p}}{\sqrt{p(1-p)}}$. An omitted interaction of order $r$ is then multiplied in absolute value by $\epsilon_{pq}^r$. Define:
\begin{align*}
    m_r&\triangleq\abs{\set{V}{V\notin\mathcal{S},\abs{V}=r}},\\
    e_r(f)&\triangleq\left(\sum_{\substack{V\notin\mathcal{S}\\\abs{V}=r}}\hat{f}_{\vect{\pi}}(V)^2\right)^{\frac{1}{2}}.
\end{align*}
If the first omitted order is $r_0$, then:
\begin{equation*}
    \abs{\bias{\mathcal{S}}{f;p,q}}\le\sum_{r=r_0}^n\epsilon_{pq}^r\sqrt{m_r}\,e_r(f).
\end{equation*}
Thus, misspecification beginning at order $r_0$ produces $\order{\abs{q-p}^{r_0}}$ local bias as $q\to p$. Higher-order misspecification can, in principle, be tolerated for nearby policies when its Fourier energy $e_r$ and multiplicity $m_r$ are controlled. Adding a set $V$ to $\mathcal{S}$ increases the weight variance by $\left(\Delta_{\vect{\pi}\vect{\pi}'}^V\right)^2$ and removes the corresponding bias term $-\hat{f}_{\vect{\pi}}(V)\Delta_{\vect{\pi}\vect{\pi}'}^V$.

For the causal contrast, the relevant true functions are $x_i^+\triangleq y_it_i^+$ and $x_i^-\triangleq y_it_i^-$, rather than $y_i$ alone (except for EAO, where $t_i^+\triangleq 1$). The exact bias of \eqref{eq:offpolicy_estimator} is
\begin{equation*}
    -\frac{1}{n}\sum_{i=1}^n\left(\frac{\sum_{V\notin\mathcal{S}_i^+}\hat{x}^+_{i,\vect{\pi}}(V)\Delta_{\vect{\pi}\vect{\pi}'}^V}{p_i^+(\vect{\pi}')}-\frac{\sum_{V\notin\mathcal{S}_i^-}\hat{x}^-_{i,\vect{\pi}}(V)\Delta_{\vect{\pi}\vect{\pi}'}^V}{p_i^-(\vect{\pi}')}\right).
\end{equation*}
This result separates the contribution to the bias of mechanism misspecification from the distance over which the analyst attempts to policy-transport a contrast.

\section{Efficiency under structured interference}\label{sec:efficiency}

With no restriction on $f$, Proposition \ref{prop:spectral_transport} uses the global inverse-probability weight. Its variance is $\prod_{i=1}^n(1+(\Delta_{\vect{\pi}\vect{\pi}'}^i)^2)-1$, which is exponential in $n$ when the policy shift is non-vanishing. Structural restrictions reduce the Fourier subspace and hence remove unnecessary components of the weight. For a function $f$, define its degree as:
\begin{equation*}
    \degree{f}\triangleq\max\set{\abs{V}}{\hat{f}_{\vect{\pi}}(V)\ne0},
\end{equation*}
and its influence set as:
\begin{equation*}
    \Gamma_f\triangleq\bigcup_{V\in\supppi{\vect{\pi}}{f}}V.
\end{equation*}
For products, $\degree{fg}\le\degree{f}+\degree{g}$ and $\Gamma_{fg}\subseteq\Gamma_f\cup\Gamma_g$. (The inclusions may be strict because terms can cancel on the Boolean cube.) Thus, restrictions on \emph{both} the potential outcome and exposure functions restrict the support of the exposure-conditional functions $x_i^+\triangleq y_it_i^+$ and $x_i^-\triangleq y_it_i^-$. Table \ref{tab:interf_weights} gives the weights needed for the four Fourier support restrictions used in the literature.

\begin{table}[t]
\centering
\begin{tabular}{@{}lll@{}}
\hline
Assumption & Transport weight & Weight variance under $\vect{\pi}$\\
\hline
Complete on $\Gamma$
& $\prod_{i\in\Gamma}\left(1+\Delta^i\chi_{\vect{\pi}}^i(\vect{Z})\right)$
& $\prod_{i\in\Gamma}\left(1+\left(\Delta^i\right)^2\right)-1$\\[5pt]
Degree at most $d$
& $\sum_{\substack{V\subseteq\Gamma\\|V|\le d}}\Delta^V\chi_{\vect{\pi}}^V(\vect{Z})$
& $\sum_{\substack{V\subseteq\Gamma\\1\le|V|\le d}}\left(\Delta^V\right)^2$\\[7pt]
Linear
& $1+\sum_{i\in\Gamma}\Delta^i\chi_{\vect{\pi}}^i(\vect{Z})$
& $\sum_{i\in\Gamma}\left(\Delta^i\right)^2$\\[5pt]
SUTVA for unit $i$
& $1+\Delta^i\chi_{\vect{\pi}}^i(\vect{Z})$
& $(\Delta^i)^2$\\
\hline
\end{tabular}
\caption{\textbf{Minimum-variance spectral transport weights for common interference assumptions.} \\Minimum variance is among weights that transport every function in the stated subspace. For readability, we have suppressed the dependence on design and target policies i.e. $\Delta^i=\Delta_{\vect{\pi}\vect{\pi}'}^i$ and $\Delta^V=\Delta_{\vect{\pi}\vect{\pi}'}^V$.}
\label{tab:interf_weights}
\end{table}

We next give a simple consistency condition that separates the size of each unit's Fourier neighborhood from the amount of overlap between neighborhoods. Consider a sequence of finite populations indexed by $n$, with population $[n]$ assigned under the product policy $\vect{\pi}^{(n)}$. Let:
\begin{equation*}
    \widehat{\theta}^{(n)}\triangleq\frac{1}{n}\sum_{i=1}^n\widehat{\theta}_i^{(n)}
\end{equation*}
be a transported mean or causal estimator, possibly based on a misspecified Fourier subspace, where the unit-level estimator $\widehat{\theta}_i^{(n)}$ depends only on the treatment assignments of units in $\Gamma_i^{(n)}$. Define the spectral ``overlap'' degree:
\begin{equation*}
    D^{(n)}\triangleq\max_{i\in[n]}\abs{\set{j\in[n]\setminus\{i\}}{\Gamma_i^{(n)}\cap\Gamma_j^{(n)}\ne\emptyset}}.
\end{equation*}

\begin{theorem}[Consistency under spectral overlap]\label{prop:consistency}
If $\max_{i\in[n]}\expectpi{\vect{\pi}^{(n)}}{\left(\widehat{\theta}_i^{(n)}\right)^2}\le B^{(n)}$, then:
\begin{equation}
    \variancepi{\vect{\pi}^{(n)}}{\widehat{\theta}^{(n)}}\le\frac{B^{(n)}\left(D^{(n)}+1\right)}{n}.
    \label{eq:consistency_variance_bound}
\end{equation}
Consequently, if $B^{(n)}\left(D^{(n)}+1\right)=\littleo{n}$, then $\widehat{\theta}^{(n)}\xrightarrow{p}\expectpi{\vect{\pi}^{(n)}}{\widehat{\theta}^{(n)}}$.
\end{theorem}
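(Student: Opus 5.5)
The plan is to expand the variance of the average into a double sum of unit-level covariances, kill the pairs with disjoint influence sets by independence, and bound the remaining pairs by Cauchy--Schwarz. I write $\mathbb{P}$ for $\mathbb{P}_{\vect{\pi}^{(n)}}$ and suppress the superscript $(n)$ where harmless. First expand
\begin{equation*}
    \variancepi{\vect{\pi}^{(n)}}{\widehat{\theta}^{(n)}}=\frac{1}{n^2}\sum_{i=1}^n\sum_{j=1}^n\covpi{\vect{\pi}^{(n)}}{\widehat{\theta}_i}{\widehat{\theta}_j}.
\end{equation*}

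The key structural step is that covariances vanish off the overlap graph. Since $\widehat{\theta}_i$ is a function of $(Z_k)_{k\in\Gamma_i}$ only, and the product policy makes the coordinates of $\vect{Z}$ mutually independent, $\widehat{\theta}_i$ and $\widehat{\theta}_j$ are independent whenever $\Gamma_i\cap\Gamma_j=\emptyset$. Hence their covariance is zero, and only pairs with $j=i$ or $j\in N_i\triangleq\set{j\ne i}{\Gamma_i\cap\Gamma_j\ne\emptyset}$ survive. By the definition of $D^{(n)}$, $\abs{N_i}\le D^{(n)}$.

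For the surviving pairs, Cauchy--Schwarz gives
\begin{equation*}
    \abs{\covpi{\vect{\pi}^{(n)}}{\widehat{\theta}_i}{\widehat{\theta}_j}}\le\variancepi{\vect{\pi}^{(n)}}{\widehat{\theta}_i}^{1/2}\variancepi{\vect{\pi}^{(n)}}{\widehat{\theta}_j}^{1/2}\le\expectpi{\vect{\pi}^{(n)}}{\widehat{\theta}_i^2}^{1/2}\expectpi{\vect{\pi}^{(n)}}{\widehat{\theta}_j^2}^{1/2}\le B^{(n)}.
\end{equation*}
Each row $i$ then contributes at most $(1+D^{(n)})B^{(n)}$. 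Summing over the $n$ rows and dividing by $n^2$ yields \eqref{eq:consistency_variance_bound}.

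Consistency then follows from Chebyshev's inequality. For any $\varepsilon>0$,
\begin{equation*}
    \mathbb{P}\left(\abs{\widehat{\theta}^{(n)}-\expectpi{\vect{\pi}^{(n)}}{\widehat{\theta}^{(n)}}}>\varepsilon\right)\le\frac{B^{(n)}(D^{(n)}+1)}{n\varepsilon^2}\to0.
\end{equation*}
There is no real obstacle here. The only point needing care is the independence claim: it needs only that $\widehat{\theta}_i$ is measurable with respect to the coordinates in $\Gamma_i$ and that the design is a product measure, and it does not use any Fourier structure or correct specification. This is why the theorem applies even to misspecified weights, with convergence to the estimator's own mean rather than the target.
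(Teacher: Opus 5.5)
Your proposal is correct and follows the paper's proof essentially step for step. Both arguments use independence under the product design to kill covariances between summands with disjoint influence sets. Both bound each of the at most $n(D^{(n)}+1)$ surviving ordered pairs by $B^{(n)}$ via Cauchy--Schwarz, and both finish with Chebyshev's inequality for consistency.
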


\noindent\textbf{Interpretation.} Overlap quantifies a dependence penalty: relative to independent summands, the variance bound expands by a factor of at most $D^{(n)}+1$. This motivates defining an overlap-adjusted effective sample size:
\begin{equation*}
    n_{\mathrm{eff}}^{(n)}\triangleq\frac{n}{D^{(n)}+1}.
\end{equation*}
If $D^{(n)}=\littleo{1}$, which implies eventual disjointness of overlap, or $D^{(n)}=\order{1}$, then $n_{\mathrm{eff}}^{(n)}=\bigtheta{n}$. If $D^{(n)}=\littleomega{1}$ but $D^{(n)}=\littleo{n}$, the effective size still diverges. When $D^{(n)}=\bigtheta{n}$, this overlap bound does not vanish from bounded second moments alone, and additional cancellation would be required to establish consistency. We emphasize that the sets $\Gamma_i^{(n)}$ need not be network neighborhoods, rather they may be any sets of units whose treatment assignments suffice to influence the summands $\widehat{\theta}_i$.

For a fixed homogeneous shift from $p$ to $q$, write $c\triangleq\frac{(q-p)^2}{p(1-p)}$ and $s^{(n)}\triangleq\max_i\abs{\Gamma_i^{(n)}}$. If the remaining factors in $\widehat{\theta}_i^{(n)}$ are uniformly bounded, complete local interference gives $B^{(n)}=\order{(1+c)^{s^{(n)}}}$, whereas linear interference gives $B^{(n)}=\order{1+cs^{(n)}}$. Therefore, Theorem \ref{prop:consistency} gives sufficient conditions on the size of influence sets for consistency:
\begin{equation*}
    \begin{aligned}
    (1+c)^{s^{(n)}}&=\littleo{n_{\mathrm{eff}}^{(n)}} && \text{under complete local interference,}\\
    s^{(n)}&=\littleo{n_{\mathrm{eff}}^{(n)}} && \text{under linear interference.}
    \end{aligned}
\end{equation*}
Here $s^{(n)}$ is the largest number of units whose treatments can influence a given unit's summand, while $D^{(n)}$ is the largest number of other summands sharing at least one such unit that determines $n_{\mathrm{eff}}^{(n)}$. Under complete local interference, logarithmically growing influence sets are permitted, i.e. it is sufficient that $s^{(n)}\le\kappa\log n_{\mathrm{eff}}^{(n)}$ for some fixed $\kappa<\log(1+c)^{-1}$. When $D^{(n)}=\order{1}$, we have $n_{\mathrm{eff}}^{(n)}=\bigtheta{n}$, so the same condition holds with $\log n$ instead of $\log n_{\mathrm{eff}}^{(n)}$. By contrast, linear interference permits the substantially larger influence regime $s^{(n)}=\littleo{n_{\mathrm{eff}}^{(n)}}$. 

Two common network-induced influence structures make these rates concrete. For disjoint fully-connected networks of size at most $m_n$, we have $s^{(n)}=\order{m_n}$ and $D^{(n)}=\order{m_n}$, and the sufficient conditions become $m_n(1+c)^{m_n}=\littleo{n}$ under complete interference within clusters and $m_n^2=\littleo{n}$ under linear interference. For a fixed $\ell>0$, suppose that each unit depends on the treatment of its own unit and those of its $\ell$-hop neighbors in a network with maximum degree $k_n$. Then we have $s^{(n)}=\order{k_n^\ell}$ and $D^{(n)}=\order{k_n^{2\ell}}$, because only units within $2\ell$-hops can have overlap. The corresponding sufficient conditions are $k_n^{2\ell}(1+c)^{k_n^\ell}=\littleo{n}$ and $k_n^{3\ell}=\littleo{n}$, respectively. These are sufficient worst-case scalings, and a finer dependence argument could improve them for particular networks.

For the EAO, the preceding bias formula and the overlap-based variance bound combine directly. This makes the trade-off in choosing the Fourier subspace explicit. In what follows, we fix $n$ in the preceding sequence and suppress the dependence on it for readability. 

\begin{corollary}[Spectral bias--variance envelope]
\label{prop:spectral_bias_variance}
Let $\vect{\pi}'$ be the target policy, let $\mathcal{S}_i$ be unit $i$'s retained Fourier support, and define:
\begin{align*}
    \widehat{\mu}_{\mathcal{S}}\triangleq\frac{1}{n}\sum_{i=1}^nY_iW^{\mathcal{S}_i},\qquad\mu_{\vect{\pi}'}\triangleq\frac{1}{n}\sum_{i=1}^n\expectpi{\vect{\pi}'}{Y_i},\qquad r_i \triangleq\normpi{\vect{\pi}}{y_i-P_{\mathcal{S}_i}y_i}.
\end{align*}
Let $\abs{y_i(\vect{z})}\le M$ and $D$ be the overlap degree of the influence sets on which these summands depend. Then:
\begin{align*}
    \expectpi{\vect{\pi}}{(\widehat\mu_{\mathcal{S}}-\mu_{\vect{\pi}'})^2} 
    &=\variancepi{\vect{\pi}}{\widehat\mu_{\mathcal{S}}}+\left(\frac{1}{n}\sum_{i=1}^n\sum_{V\notin\mathcal{S}_i}\hat{y}_{i,\vect{\pi}}(V)\Delta_{\vect{\pi}\vect{\pi}'}^V\right)^2 \\
    &\le\frac{M^2(D+1)}{n}\max_{i\in[n]}(1+d_{\mathcal{S}_i}^2(\vect{\pi}',\vect{\pi}))+\left(\frac{1}{n}\sum_{i=1}^nr_id_{\mathcal{S}_i^c}(\vect{\pi}',\vect{\pi})\right)^2.
\end{align*}
\end{corollary}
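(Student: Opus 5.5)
We will start from the standard bias--variance decomposition of the mean-squared error and then bound each piece using earlier results. Since $\mu_{\vect{\pi}'}$ is deterministic, $\expectpi{\vect{\pi}}{(\widehat\mu_{\mathcal{S}}-\mu_{\vect{\pi}'})^2}=\variancepi{\vect{\pi}}{\widehat\mu_{\mathcal{S}}}+(\expectpi{\vect{\pi}}{\widehat\mu_{\mathcal{S}}}-\mu_{\vect{\pi}'})^2$. By linearity, $\expectpi{\vect{\pi}}{\widehat\mu_{\mathcal{S}}}-\mu_{\vect{\pi}'}=\frac1n\sum_i\bias{\mathcal{S}_i}{y_i;\vect{\pi},\vect{\pi}'}$. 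Proposition \ref{prop:misspecification_bias} applied to each $f=y_i$ gives the exact expression $-\sum_{V\notin\mathcal{S}_i}\hat{y}_{i,\vect{\pi}}(V)\Delta_{\vect{\pi}\vect{\pi}'}^V$. Squaring removes the sign, which yields the stated equality.

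For the inequality, we treat the two terms separately.

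\textbf{Bias term.} The triangle inequality together with \eqref{eq:misspecification_bias_bound} gives
\[
\abs{\tfrac1n\sum_i\bias{\mathcal{S}_i}{y_i}}\le\tfrac1n\sum_i r_i\,d_{\mathcal{S}_i^c}(\vect{\pi}',\vect{\pi}).
\]
Squaring both nonnegative sides produces the second term of the bound.

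\textbf{Variance term.} We invoke Theorem \ref{prop:consistency} with $\widehat\theta_i=Y_iW^{\mathcal{S}_i}$; by hypothesis $D$ is the overlap degree of the sets on which these summands depend. It then suffices to take $B=M^2\max_i(1+d_{\mathcal{S}_i}^2)$. Using $\abs{y_i}\le M$ pointwise,
\[
\expectpi{\vect{\pi}}{Y_i^2(W^{\mathcal{S}_i})^2}\le M^2\expectpi{\vect{\pi}}{(W^{\mathcal{S}_i})^2}=M^2\normpi{\vect{\pi}}{w^{\mathcal{S}_i}}^2.
\]
By orthonormality of the characters and $\Delta^\emptyset=1$, this norm equals $1+\sum_{V\in\mathcal{S}_i\setminus\{\emptyset\}}(\Delta^V)^2=1+d_{\mathcal{S}_i}^2(\vect{\pi}',\vect{\pi})$. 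Equivalently, it is $\variancepi{\vect{\pi}}{W^{\mathcal{S}_i}}+(\expectpi{\vect{\pi}}{W^{\mathcal{S}_i}})^2$ with mean $1$, via \eqref{eq:spectral_weight_variance}. Taking the maximum over $i$ and substituting into \eqref{eq:consistency_variance_bound} gives the first term.

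There is no real obstacle. The only point needing care is that the variance bound uses second moments of the summands rather than their variances, so we must include the constant part of the weight. This is why the factor is $1+d_{\mathcal{S}_i}^2$ and not $d_{\mathcal{S}_i}^2$ alone. We must also confirm that the $D$ in the statement refers to influence sets containing every unit entering both $y_i$ and $W^{\mathcal{S}_i}$, as Theorem \ref{prop:consistency} requires.
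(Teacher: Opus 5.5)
Your proposal is correct and follows the paper's proof essentially step by step. You take the exact bias from Proposition~\ref{prop:misspecification_bias}, bound each summand by $r_i d_{\mathcal{S}_i^c}(\vect{\pi}',\vect{\pi})$, and get the variance term by applying Theorem~\ref{prop:consistency} with $B=M^2\max_i\bigl(1+d_{\mathcal{S}_i}^2(\vect{\pi}',\vect{\pi})\bigr)$, which comes from $\expectpi{\vect{\pi}}{(W^{\mathcal{S}_i})^2}=1+d_{\mathcal{S}_i}^2(\vect{\pi}',\vect{\pi})$; you simply make explicit the MSE decomposition and the triangle inequality that the paper leaves implicit.
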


\noindent\textbf{Interpretation.} The first term is an overlap-adjusted variance cost determined by the retained spectrum, while the second bounds the squared bias contributed by the omitted spectrum. Enlarging $\mathcal{S}_i$ can reduce the second term only by potentially increasing the first, so support choice is better viewed as an explicit bias--variance decision rather than a binary correct-or-incorrect model specification. 

For causal contrasts, the same decomposition applies to the exposure-weighted functions $x_i^+\triangleq y_it_i^+$ and $x_i^-\triangleq y_it_i^-$, with their target exposure probabilities carried through as in \eqref{eq:offpolicy_estimator}.

\section{Variance nonidentifiability and conservative bounds}\label{sec:variance}

Even when the estimator is unbiased, its randomization variance is generally not identifiable from a single realized assignment without additional restrictions. This is a version of the familiar non-identifiability of the Neymanian variance of the difference-in-means in the absence of interference \cite{neyman1990causalinference}, due to never jointly observing multiple potential outcomes per unit. Here it applies to arbitrary functions of a treatment vector.

\begin{proposition}[Generic variance nonidentifiability]\label{prop:variance_nonidentification}
Consider a policy $\vect{\pi}$ that assigns positive probability to at least two treatment vectors, and consider function $f$ that is otherwise unrestricted. There is no statistic depending only on the observed pair $(\vect{Z},F)$ that is unbiased for $\variancepi{\vect{\pi}}{F}$ for every function $f$.
\end{proposition}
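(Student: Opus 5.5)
The plan is to argue by contradiction, using the fact that an unbiased statistic must satisfy an expectation identity that is \emph{linear} in the observed outcome values. The variance, however, is \emph{quadratic} in the vector of potential outcomes and contains cross terms between outcomes at different assignments. No function of a single observed pair can generate such cross terms in expectation.

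Suppose a statistic $\hat v(\vect{z},x)$ satisfies $\expectpi{\vect{\pi}}{\hat v(\vect{Z},f(\vect{Z}))}=\variancepi{\vect{\pi}}{F}$ for every $f:\booln\to\real$. Let $\mathcal{Z}=\set{\vect{z}}{\probpi{\vect{\pi}}{\vect{z}}>0}$, which by hypothesis has at least two elements $\vect{a}\ne\vect{b}$, with $p_{\vect{z}}\triangleq\probpi{\vect{\pi}}{\vect{z}}$. Since $f$ is unrestricted, its values $(f(\vect{z}))_{\vect{z}\in\mathcal{Z}}$ range over all of $\real^{\mathcal{Z}}$. The unbiasedness identity then reads
\begin{equation*}
\sum_{\vect{z}\in\mathcal{Z}}p_{\vect{z}}\,\hat v(\vect{z},f(\vect{z}))=\sum_{\vect{z}}p_{\vect{z}}f(\vect{z})^2-\Big(\sum_{\vect{z}}p_{\vect{z}}f(\vect{z})\Big)^2 .
\end{equation*}
The left side is a sum of functions each depending on a single coordinate $f(\vect{z})$, so it is additively separable in the coordinates. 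The right side is a polynomial whose cross term in $f(\vect{a})f(\vect{b})$ has coefficient $-2p_{\vect{a}}p_{\vect{b}}\ne0$.

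To derive the contradiction concretely, I will fix $f$ to be $0$ off $\{\vect{a},\vect{b}\}$, set $f(\vect{a})=s$ and $f(\vect{b})=t$, and take the mixed second difference in $(s,t)$:
\begin{equation*}
\Phi(s,t)-\Phi(s,0)-\Phi(0,t)+\Phi(0,0),
\end{equation*}
where $\Phi$ denotes either side of the identity. On the left, each term depends on only one of $s$ or $t$, so the mixed difference is exactly $0$. No regularity of $\hat v$ is needed, since no derivatives are taken. On the right, the same combination equals $-2p_{\vect{a}}p_{\vect{b}}st$, which is nonzero for $s=t=1$. This contradiction proves the claim.

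The only delicate point is the modeling convention: the claim holds because $\hat v$ may use $\vect{Z}$ and the single realized value $F$, but not the function $f$ itself. Assignments outside $\mathcal{Z}$ are irrelevant, since they contribute nothing to either side.
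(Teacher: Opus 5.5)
Your proof is correct and follows the same route as the paper: the expectation of any statistic of $(\vect{Z},F)$ is additively separable in the values $f(\vect{z})$, whereas the variance contains the cross term $-2p_{\vect{a}}p_{\vect{b}}f(\vect{a})f(\vect{b})$ between distinct assignments. Your mixed second difference spells out the last step, that no additively separable expression can reproduce such a cross term, which the paper only asserts.
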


\noindent\textbf{Interpretation.} The obstruction is fundamentally observational, since one realized assignment reveals one value $f(\vect{Z})$, while the variance contains products of potential outcomes under mutually exclusive assignments. Interference restrictions nevertheless allow identifiable bounds. 

\begin{proposition}[Identifiable covariance bounds]\label{prop:covariance_bounds}
Covariance between $f,g$ satisfies:
\begin{align}
    -\frac{1}{2}\expectpi{\vect{\pi}}{(F-G)^2[1-\probpi{\vect{\pi}}{\vect{Z};\Gamma_f\cap\Gamma_g}]}\le\,&\covpi{\vect{\pi}}{F}{G}\nonumber\\
    \le\,&\frac{1}{2}\expectpi{\vect{\pi}}{(F+G)^2[1-\probpi{\vect{\pi}}{\vect{Z};\Gamma_f\cap\Gamma_g}]}.
    \label{eq:covariance_bounds}
\end{align}
As the random variables $F\pm G$ are observable, the expectation on either side of \eqref{eq:covariance_bounds} is identifiable, so they provide unbiased estimators of those bounds.
\end{proposition}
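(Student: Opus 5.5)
The plan is to reduce to the overlap coordinates $A\triangleq\Gamma_f\cap\Gamma_g$ by conditioning, then use an Aronow--Samii-type AM--GM bound on the cross terms, followed by Jensen's inequality.

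\textbf{Reduction to $A$.} By definition of the influence sets, $f$ depends only on $z_{\Gamma_f}$ and $g$ only on $z_{\Gamma_g}$. Under the product policy $\mathbb{P}_{\vect{\pi}}$, the blocks $\vect{Z}_A$, $\vect{Z}_{\Gamma_f\setminus A}$ and $\vect{Z}_{\Gamma_g\setminus A}$ are independent. So conditionally on $\vect{Z}_A=a$, $F$ and $G$ are independent. This conditional independence is the step to set up carefully; it is the main structural input, and everything after it is elementary. Write $p(a)\triangleq\probpi{\vect{\pi}}{a;A}$, $m_f(a)\triangleq\condexpectpi{\vect{\pi}}{F}{\vect{Z}_A=a}$ and similarly $m_g(a)$. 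Then
\begin{equation*}
\expectpi{\vect{\pi}}{FG}=\sum_a p(a)m_f(a)m_g(a),\qquad \expectpi{\vect{\pi}}{F}\expectpi{\vect{\pi}}{G}=\sum_{a,a'}p(a)p(a')m_f(a)m_g(a').
\end{equation*}
Subtracting gives the exact identity
\begin{equation*}
\covpi{\vect{\pi}}{F}{G}=\sum_a p(a)(1-p(a))\,m_f(a)m_g(a)-\sum_{a\neq a'}p(a)p(a')\,m_f(a)m_g(a').
\end{equation*}

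\textbf{Upper bound.} For each cross term, use $-m_f(a)m_g(a')\le\tfrac12\left(m_f(a)^2+m_g(a')^2\right)$. Since $\sum_{a'\neq a}p(a')=1-p(a)$, the cross sum is bounded above by $\tfrac12\sum_a p(a)(1-p(a))\left(m_f(a)^2+m_g(a)^2\right)$. Combining with the diagonal term gives
\begin{equation*}
\covpi{\vect{\pi}}{F}{G}\le\tfrac12\sum_a p(a)(1-p(a))\left(m_f(a)+m_g(a)\right)^2.
\end{equation*}
Conditional Jensen then gives $\left(m_f(a)+m_g(a)\right)^2\le\condexpectpi{\vect{\pi}}{(F+G)^2}{\vect{Z}_A=a}$. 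Since $1-p(a)=1-\probpi{\vect{\pi}}{\vect{Z};A}$ is a function of $\vect{Z}_A$, reassembling the sum gives exactly $\tfrac12\expectpi{\vect{\pi}}{(F+G)^2[1-\probpi{\vect{\pi}}{\vect{Z};\Gamma_f\cap\Gamma_g}]}$.

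\textbf{Lower bound and edge case.} The lower bound follows by applying the upper bound to the pair $(f,-g)$. This uses $\covpi{\vect{\pi}}{F}{-G}=-\covpi{\vect{\pi}}{F}{G}$ and the fact that $\Gamma_{-g}=\Gamma_g$. If $A=\emptyset$, then $\probpi{\vect{\pi}}{\cdot;\emptyset}=1$, so both bounds are $0$. This is consistent, because independence of $F$ and $G$ gives zero covariance; the sums above also reduce correctly with a single trivial value of $a$.

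\textbf{Identifiability.} The weight $1-\probpi{\vect{\pi}}{\vect{Z};A}$ is a known function of the realized assignment, and $F\pm G$ are observed. Hence $\tfrac12(F\pm G)^2[1-\probpi{\vect{\pi}}{\vect{Z};A}]$ are statistics whose expectations are the bounds, which makes them unbiased estimators of those bounds.
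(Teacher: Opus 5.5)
Your proposal is correct and follows essentially the same route as the paper's proof: condition on $\vect{Z}_{\Gamma_f\cap\Gamma_g}$ to get the exact covariance identity, apply $-uv\le\frac{1}{2}(u^2+v^2)$ to the cross terms, then use conditional Jensen and the law of total expectation. The only cosmetic difference is that you obtain the lower bound by applying the upper bound to $(f,-g)$ rather than repeating the argument with the reversed inequality, and you spell out the conditional-independence step more explicitly than the paper does.
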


\noindent\textbf{Interpretation.} Only units whose treatments enter both functions can induce covariance. If $\Gamma_f\cap\Gamma_g=\emptyset$, both bounds collapse to zero; otherwise, the observable squared sum and difference replace unidentified cross-assignment products, trading sharpness for identification.

Apply the upper bound to $\widehat{\theta}\triangleq n^{-1}\sum_{i=1}^n\widehat{\theta}_i$, where the unit-level estimator $\widehat{\theta}_i$ depends on the treatment assignments of units in $\Gamma_i$. An unbiased estimator of an upper bound $V^+\ge\var{\widehat{\theta}}$ is:
\begin{equation*}
    \widehat{V^+}=\frac{1}{n^2}\left(2\sum_{i=1}^n\widehat{\theta}_i^2(1-\probpi{\vect{\pi}}{\vect{Z};\Gamma_i})+\sum_{i<j}(\widehat{\theta}_i+\widehat{\theta}_j)^2(1-\probpi{\vect{\pi}}{\vect{Z};\Gamma_i\cap\Gamma_j})\right).
\end{equation*}
For the off-policy estimator \eqref{eq:offpolicy_estimator}, $\widehat{\theta}_i$ is its $i^{\text{th}}$ summand before division by $n$, and $\Gamma_i$ contains every unit whose treatment assignment can affect $\widehat{\theta}_i$, whether through the outcome, either exposure indicator, or either exposure-specific weight.

Unbiasedness of $\widehat{V^+}$ for a variance bound does not by itself guarantee finite-sample coverage for a plug-in normal interval. If the Doob CLT in Theorem \ref{theorem:martingaleclt} holds and $\frac{\widehat{V^+}}{V^+}\xrightarrow{p}1$, however, the interval
\begin{equation*}
    \widehat{\theta}\pm z_{1-\alpha/2}\sqrt{\widehat{V^+}}
\end{equation*}
has asymptotic coverage at least $1-\alpha$.

\section{Policy curve and illustrations}\label{sec:eao_curve}

Suppose the design and target policies are homogeneous, so that $\pi_i=p$ and $\pi_i'=q$ for every $i$. Define the common policy shift coefficient:
\begin{equation*}
    \Delta_{pq}\triangleq\frac{q-p}{\sqrt{p(1-p)}},
\end{equation*}
and the unit-averaged order-$k$ Fourier coefficient:
\begin{equation*}
    \mathbb{Y}_k(p)\triangleq\frac{1}{n}\sum_{i=1}^n\sum_{\substack{V\subseteq[n]\\\abs{V}=k}}\hat{y}_{i,p}(V).
\end{equation*}
Fix the design probability $p$ and expand each outcome in the $p$-biased Fourier basis. Taking its expectation under the target probability $q$ and \eqref{eq:target_character_mean} gives the exact policy-curve expansion:
\begin{equation*}
    \eao{q}=\sum_{k=0}^n\Delta_{pq}^{k}\mathbb{Y}_k(p)=\sum_{k=0}^n\frac{(q-p)^k}{k!}\eaoderiv{k}{p},
\end{equation*}
where we use the fact that the coefficients of a centered polynomial in $p$ must encode the corresponding derivatives in the exact Taylor expansion at $p$, i.e. $\eaoderiv{k}{p}$ denotes the $k^{\text{th}}$ derivative of the policy curve at $p$ and:
\begin{equation}
    \label{eq:eao_derivative_fourier_level}
    \eaoderiv{k}{p}=\frac{k!}{(p(1-p))^{\frac{k}{2}}}\cdot\mathbb{Y}_k(p).
\end{equation}
The Fourier level $k$ is not merely analogous to the $k^{\text{th}}$ derivative. After the (known) normalization in \eqref{eq:eao_derivative_fourier_level}, it \emph{is} that derivative. For example, the signed level-$k$ weight:
\begin{equation*}
    \frac{k!}{(p(1-p))^{\frac{k}{2}}} \sum_{\substack{V\subseteq[n]\\\abs{V}=k}}\chi_p^V(\vect{Z}),
\end{equation*}
used appropriately in $n^{-1}\sum_iY_i(\cdot)$ gives an unbiased estimator of $\eaoderiv{k}{p}$. Under the direct- and indirect-effect definitions of \cite{hu2022interference}, the first derivative has the familiar decomposition into the expected average treatment and indirect effects:
\begin{equation*}
    \eaoderiv{1}{p}=\eate{p}+\eite{p},
\end{equation*}
whereas the higher-order derivatives can be seen as encoding higher-order networked causal contrasts. This identity also makes precise what a degree restriction does. Define:
\begin{align*}
    w_{pq}^{\le d}(\vect{z})\triangleq\sum_{\substack{V\subseteq[n]\\\abs{V}\le d}} \Delta_{pq}^{\,|V|}\chi_p^V(\vect{z}),\qquad\eaoest{\le d}{q}&\triangleq\frac{1}{n}\sum_{i=1}^nY_iW_{pq}^{\le d}.
\end{align*}
Then the three views of the same truncation line up:
\begin{equation*}
    \underbrace{\expectpi{p}{\eaoest{\le d}{q}}}_{\text{degree-$d$ transport}}=\underbrace{\sum_{k=0}^d\Delta_{pq}^{\,k}\mathbb{Y}_k(p)}_{\text{retained Fourier levels}}=\underbrace{\sum_{k=0}^d\frac{(q-p)^k}{k!}\eaoderiv{k}{p}}_{\text{Taylor expansion at $p$}}.
\end{equation*}
Consequently, its misspecification bias is exactly the negative Taylor remainder:
\begin{equation*}
    \expectpi{p}{\eaoest{\le d}{q}}-\eao{q}=-\sum_{k=d+1}^n\frac{(q-p)^k}{k!}\eaoderiv{k}{p}.
\end{equation*}
For a fixed finite population there is no separate analyticity assumption, since $\eao{q}$ is a polynomial of degree at most $n$, and the expansion becomes exact at $d=n$. For a sequence of growing populations and truncation orders, estimator bias converges to zero precisely when the policy-weighted omitted Fourier tail converges to zero; Proposition \ref{prop:misspecification_bias} gives a convenient sufficient bound. At fixed $n$, a degree-$d$ truncation has local bias $\order{\abs{q-p}^{d+1}}$ as $q\to p$, while its variance generally increases with $d$. Curvature in the policy curve therefore reflects higher-order interference, but estimating that curvature becomes less precise farther from the experimental policy, as predicted by \eqref{eq:spectral_weight_variance}. The contrast $\eao{1}-\eao{0}$ is the endpoint contrast that is often itself of interest under network interference, called the global average treatment effect (GATE), or total treatment effect (TTE) \cite{savje2021ateunknown}.

Fig. \ref{fig:offpolicy_linnonlin} illustrates this bias--variance trade-off in simulations with linear and nonlinear neighborhood interference in a simple random network. 
\begin{figure}[t!]
\centering
\includegraphics[width=\textwidth]{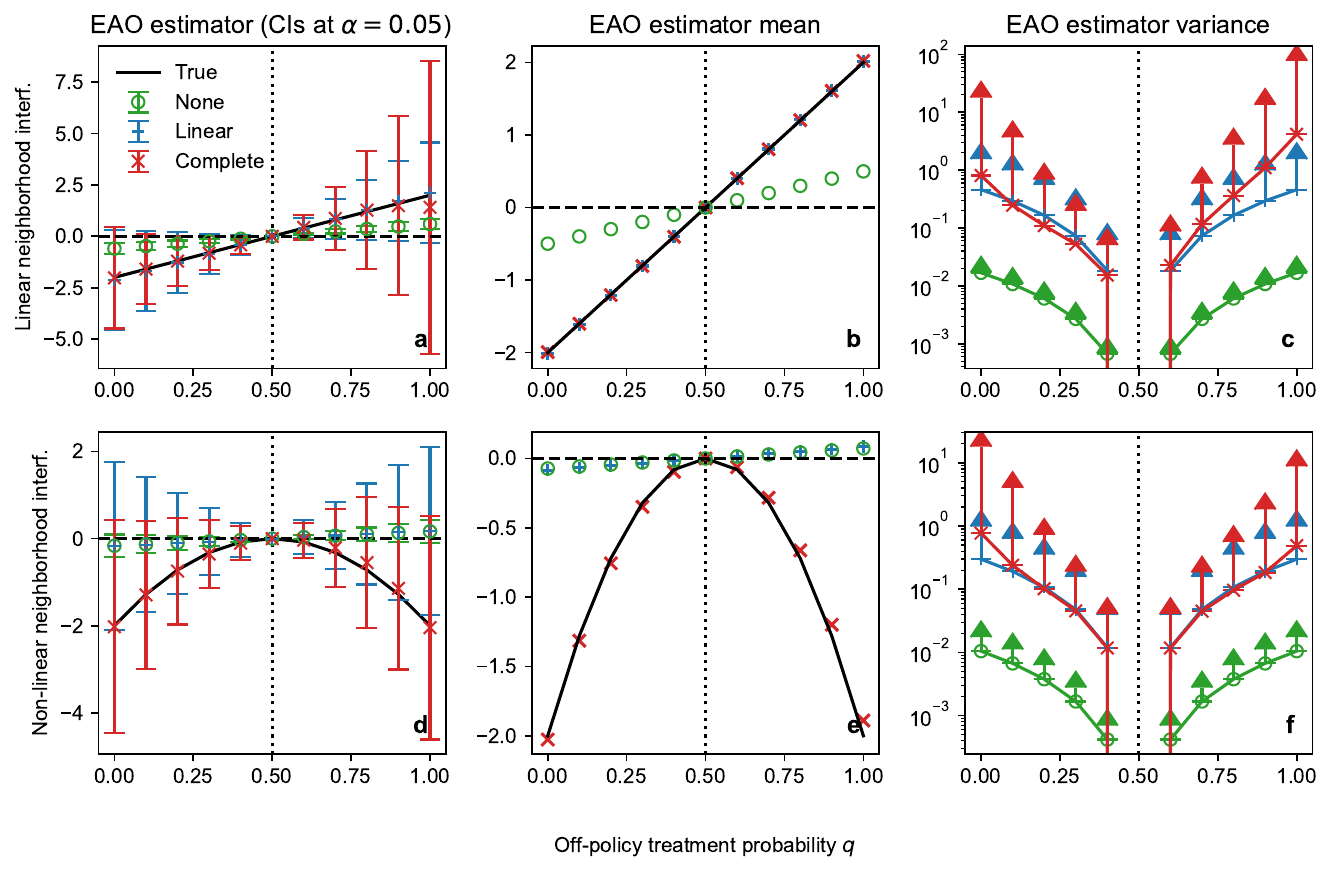}
\caption{\textbf{Simulation under linear and nonlinear neighborhood interference.} A random network with $n=1000$ and mean degree $4$ was generated once and held fixed. Treatments were independently assigned with probability $p=0.5$, and outcomes followed $y_i=\alpha+\bar z_i(\beta+\gamma(\abs{\mathcal{N}_i}\bar z_i-1)/(\abs{\mathcal{N}_i}-1))$, where $\bar z_i=\abs{\mathcal{N}_i}^{-1}\sum_{j\in\mathcal{N}_i}z_j$ and $\mathcal{N}_i$ includes $i$. We used $(\alpha,\beta,\gamma)=(2,4,0)$ in \textbf{a}--\textbf{c} and $(2,8,-8)$ in \textbf{d}--\textbf{f}. The target is the off-policy change $\eao{q}-\eao{p}$. Estimators use the SUTVA ($\circ$), linear ($+$), or complete-local ($\times$) Fourier subspaces in Table \ref{tab:interf_weights}. Panels \textbf{a} and \textbf{d} show one realization with nominal 95\% intervals based on $\widehat{V^+}$; panels \textbf{b} and \textbf{e} show empirical means across 1000 assignments; and panels \textbf{c} and \textbf{f} show empirical variances, with triangles denoting the mean estimated variance bound. The complete-local estimator is unbiased in both settings but becomes unstable farther from $p$; the lower-order estimators trade variance for bias when their Fourier restriction is misspecified.}
\label{fig:offpolicy_linnonlin}
\end{figure}

\section{Discussion}\label{sec:discussion}
Off-policy causal estimation under network interference is possible only to the extent that the interference structure prevents the target policy from demanding information about the entire treatment vector. The spectral formulation in this paper makes this requirement explicit. A chosen Fourier support set determines a transport weight, its minimum variance over the corresponding Fourier subspace, and the rate at which precision deteriorates with Fourier neighborhood size and policy distance. The same formulation nests local inverse-probability and low-order estimators \citep{cortez2022staggered,yu2022estimating}, clarifying the bias--variance trade-off when the assumed interference structure is simplified.

When that simplification is wrong, the resulting bias is not an undifferentiated model error. It is the pairing of the omitted Fourier coefficients of the exposure-weighted outcome with those of the policy shift. Consequently, the same omitted interaction can be harmless for a nearby target policy and important for a more distant one. Therefore, \eqref{eq:exact_misspecification_bias} and \eqref{eq:misspecification_bias_bound} suggest a direct sensitivity analysis: posit bounds on omitted spectral mass and report how the corresponding bias envelope grows along the policy path.

The chi-squared policy distance here is not an arbitrary choice, as it is induced by the $L^2(\mathbb{P}_{\vect{\pi}})$ geometry in which orthogonal projection gives the minimum-variance weight. This geometry suggests a broader calibration program. One could impose the same Fourier balance equations \eqref{eq:spectral_balance_constraints} while minimizing a different convex discrepancy of the target from the design policy. Linear calibration can produce signed weights, whereas entropy and other constrained calibration criteria can favor (or enforce) positive weights \cite{deville1992calibration,hainmueller2012entropy}. We conjecture that such choices yield useful alternative off-policy estimators where exact transport over the balanced subspace would remain but the Fourier closed form and minimum-$L^2$ guarantee would be replaced by the geometry of the chosen objective.

Policy curves can also test spillover mechanisms. Under homogeneous policies, Fourier degree $d$ implies a degree-$d$ polynomial EAO curve. A simultaneous confidence set for the curve, or its coefficients, can therefore be inverted: reject a mechanism class when it contains no curve in the set. Curvature rejects both no interference and additive spillovers, while higher-order restrictions test interaction orders. 

Policy-saturation experiments provide a natural application of this theory. Crépon et al. \cite{crepon2013labormarket} randomized the saturation of job-placement assistance program across labor markets to detect displacement effects on untreated job seekers. In such settings, the slope of the policy curve gives the marginal effect of increasing treatment saturation, while its curvature records how that marginal effect changes as saturation rises. Negative curvature may be consistent with intensifying displacement or congestion, whereas positive curvature may indicate reinforcement or other complementarities. Curvature does not by itself identify a particular spillover mechanism, but it provides evidence of nonlinear interference that can be interpreted alongside substantively motivated decompositions of treated and untreated units.

There is also an experimental design problem beyond estimating one prespecified causal contrast. Network-experiment design work chooses a randomization to control variance for a specified causal estimand \cite{jagadeesan2020design}. Here, however, a single experiment supports a continuum of targets $\set{\eao{\pi}}{\pi\in\Pi}$, with a design-dependent bias--variance envelope at each policy $\pi$. This opens criteria such as integrated risk along a policy path, worst-case risk over a target policy interval, or allocation of precision across select Fourier orders. Optimizing an experiment for an entire policy curve rather than for one point estimand is a worthy open direction.

Finally, we note some limitations of our approach. Correlated designs lose the product basis used in this paper, but two extensions appear plausible. One can construct an orthonormal basis directly under the joint assignment distribution, at the cost of losing the factorization over independent units. Alternatively, when the assignment can be written as $\vect{Z}=\tau(\vect{U})$ for independent randomization seeds $\vect{U}$, one can lift the function to $f\circ\tau$ and apply the product-distribution analysis in the seed space. This extension is more immediate for independent cluster-level assignments, but may also extend to more elaborate randomization designs. The obstacle is that the lift may increase dimension or destroy low-degree sparsity, and a useful theory would need to identify correlated designs admitting a low-complexity lift.

The results also characterize the limits of one realized experiment. As is typical in causal inference in the potential outcomes setup, variance is generally not identified, and here the proposed bounds can be wide. We have stated the central limit theorem for a chosen Doob-martingale reveal. Primitive network-based conditions and constructive filtrations, extensions beyond independent Bernoulli designs, and adaptive Fourier support selection with valid inference remain fruitful directions for future work.

\begin{appendix}

\section{Proofs}\label{sec:proofs}

\subsection*{Proof of Proposition \ref{prop:spectral_transport}}
\begin{proof}
Let $u$ be a candidate weighting function: $u(\vect{z})=\sum_{V\subseteq[n]}\hat{u}_{\vect{\pi}}(V)\chi_{\vect{\pi}}^V(\vect{z})$. For $f\in\mathcal{H}_{\mathcal{S}}$, orthonormality w.r.t. $\vect{\pi}$ yields:
\begin{align*}
    \expectpi{\vect{\pi}}{FU}&=\sum_{V\in\mathcal{S}}\hat{f}_{\vect{\pi}}(V)\,\hat{u}_{\vect{\pi}}(V),\\
    \expectpi{\vect{\pi}'}{F}&=\sum_{V\in\mathcal{S}}\hat{f}_{\vect{\pi}}(V) \Delta_{\vect{\pi}\vect{\pi}'}^V.
\end{align*}
The coefficient vector $\set{\hat{f}_{\vect{\pi}}(V)}{V\in\mathcal{S}}$ can be chosen arbitrarily. In particular, setting $f=\chi_{\vect{\pi}}^V$ shows that equality for the above expressions holds for every $f\in\mathcal{H}_{\mathcal{S}}$ if and only if $\forall V\in\mathcal{S}$:
\begin{equation*}
    \hat{u}_{\vect{\pi}}(V)=\Delta_{\vect{\pi}\vect{\pi}'}^V.
\end{equation*}
Now, the coefficients in $\mathcal{S}^c$ are not restricted by this equality. Parseval's identity gives:
\begin{equation*}
    \normpi{\vect{\pi}}{u}^2=\sum_{V\in\mathcal{S}}\left(\Delta_{\vect{\pi}\vect{\pi}'}^V\right)^2+\sum_{V\notin\mathcal{S}}\hat{u}_{\vect{\pi}}(V)^2.
\end{equation*}
The unique minimum sets every coefficient in the second sum to zero, yielding \eqref{eq:spectral_transport_weight}. Finally, the mean weight $\Delta_{\vect{\pi}\vect{\pi}'}^\emptyset=1$, so subtracting the squared mean of the weight gives \eqref{eq:spectral_weight_variance}.
\end{proof}

\subsection*{Proof of Proposition \ref{prop:misspecification_bias}}
\begin{proof}
The expectation of the weighted function under the design policy is:
\begin{equation*}
    \expectpi{\vect{\pi}}{F W_{\vect{\pi}\vect{\pi}'}^{\mathcal{S}}}=\sum_{V\in\mathcal{S}}\hat{f}_{\vect{\pi}}(V)\,\Delta_{\vect{\pi}\vect{\pi}'}^V.
\end{equation*}
The target expectation is the same sum, except over all subsets. Their difference yields the exact bias in \eqref{eq:exact_misspecification_bias}. Applying Cauchy-Schwarz inequality gives \eqref{eq:misspecification_bias_bound}. To see sharpness, choose the arbitrary omitted coefficient vector for $f$ proportional to $\set{\Delta_{\vect{\pi}\vect{\pi}'}^V}{V\notin\mathcal{S}}$, with norm $c$. This perfectly aligns the two vectors in the Cauchy-Schwarz inequality. If $d_{\mathcal{S}^c}=0$, both sides are zero for every omitted spectrum.
\end{proof}

\subsection*{Proof of Theorem \ref{prop:consistency}}
\begin{proof}
Under the Bernoulli design, $\widehat{\theta}_i^{(n)}$ and $\widehat{\theta}_j^{(n)}$ are independent whenever their influence sets are disjoint. There are at most $n\left(D^{(n)}+1\right)$ ordered unit-pairs with intersecting influence sets (including the units paired with themselves). For every such pair, applying Cauchy-Schwarz inequality yields $\abs{\covpi{\vect{\pi}^{(n)}}{\widehat{\theta}_i^{(n)}}{\widehat{\theta}_j^{(n)}}}\le B^{(n)}$. Summing the covariances and dividing by $n^2$ gives \eqref{eq:consistency_variance_bound}. Applying Chebyshev's inequality then yields consistency.
\end{proof}

\subsection*{Proof of Corollary \ref{prop:spectral_bias_variance}}
\begin{proof}
Proposition \ref{prop:misspecification_bias} gives the exact squared-bias term and bounds its $i^\textsuperscript{th}$ summand by $r_i d_{\mathcal{S}_i^c}(\vect{\pi}',\vect{\pi})$. Moreover,
\begin{equation*}
    \expectpi{\vect{\pi}}{(Y_iW_i)^2}\le M^2\expectpi{\vect{\pi}}{W_i^2}=M^2(1+d_{\mathcal{S}_i}^2(\vect{\pi}',\vect{\pi})).
\end{equation*}
Applying Theorem \ref{prop:consistency} proves the variance term.
\end{proof}

\subsection*{Proof of Proposition \ref{prop:variance_nonidentification}}
\begin{proof}
The expectation of any statistic based on observables $(\vect{Z},F)$ is additively separable over the values in $\set{f(\vect{z})}{\vect{z}\in\booln}$, because only one value $f(\vect{Z})$ is realized. But:
\begin{equation*}
    \variancepi{\vect{\pi}}{F}=\expectpi{\vect{\pi}}{F^2}-\expectpi{\vect{\pi}}{F}^2=\sum_{\vect{z}}\probpi{\vect{\pi}}{\vect{z}}f(\vect{z})^2-\sum_{\vect{z},\vect{z}'}\probpi{\vect{\pi}}{\vect{z}}\probpi{\vect{\pi}}{\vect{z}'}f(\vect{z})f(\vect{z}')
\end{equation*}
contains products of potential outcomes at distinct assignments. Such cross-assignment products cannot be given by an additively separable expectation over unrestricted $f$.
\end{proof}

\subsection*{Proof of Proposition \ref{prop:covariance_bounds}}
\begin{proof}
We use the notation $\vect{Z}_{\Gamma}$ to restrict a treatment vector to the units in $\Gamma$. For each assignment vector $\vect{z}$ on the units in $\Gamma_f\cap\Gamma_g$, let:
\begin{align*}
    p_{\vect{z}}&\triangleq\probpi{\vect{\pi}}{\vect{Z}_{\Gamma_f\cap\Gamma_g}=\vect{z}},\\
    m_f(\vect{z})&=\expectpi{\vect{\pi}}{F\mid\vect{Z}_{\Gamma_f\cap\Gamma_g}=\vect{z}},&m_g(\vect{z})&=\expectpi{\vect{\pi}}{G\mid\vect{Z}_{\Gamma_f\cap\Gamma_g}=\vect{z}}.
\end{align*}
Conditional on $\vect{Z}_{\Gamma_f\cap\Gamma_g}=\vect{z}$, the units outside of $\Gamma_f\cap\Gamma_g$ whose treatments enter $F$ and $G$ are disjoint, and hence don't contribute to their covariance. Thus:
\begin{equation*}
    \covpi{\vect{\pi}}{F}{G}=\sum_{\vect{z}}p_{\vect{z}}(1-p_{\vect{z}})\,m_f(\vect{z})\,m_g(\vect{z})-\sum_{\vect{z}\ne \vect{z}'}p_{\vect{z}}p_{\vect{z}'}\,m_f(\vect{z})\,m_g(\vect{z}').
\end{equation*}
Applying $-uv\le\frac{u^2+v^2}{2}$ to the second sum and collecting by assignment vectors gives:
\begin{equation*}
    \covpi{\vect{\pi}}{F}{G}\le\frac{1}{2}\sum_{\vect{z}}p_{\vect{z}}(1-p_{\vect{z}})(m_f(\vect{z})+m_g(\vect{z}))^2.
\end{equation*}
This bound still depends on squared conditional means, that are not observable from a single assignment---as in the proof for Proposition \ref{prop:variance_nonidentification}. Conditional Jensen's inequality bounds this further by pushing the square inside the expectation, and applying the law of total expectation then yields the upper expression in \eqref{eq:covariance_bounds}. Similarly, applying $-uv\ge-\frac{u^2+v^2}{2}$ gives:
\begin{equation*}
    \covpi{\vect{\pi}}{F}{G}\ge-\frac{1}{2}\sum_{\vect{z}}p_{\vect{z}}(1-p_{\vect{z}})(m_f(\vect{z})-m_g(\vect{z}))^2,
\end{equation*}
and conditional Jensen's inequality followed by the law of total expectation makes this bound identifiable. If $\Gamma_f\cap\Gamma_g=\emptyset$, both bounds equal zero, as they must because $F$ and $G$ are then independent.
\end{proof}

\section{A Doob-martingale CLT}\label{sec:doobclt} 

For each $n$, let $f_n:\booln\to\real$ be square-integrable under a positive Bernoulli design $\vect{\pi}_n\in(0,1)^n$, and let $F_n\triangleq f_n(\vect{Z}_n)$, $\mu_n\triangleq\expectpi{\vect{\pi}_n}{F_n}$, and $\sigma_n^2\triangleq\variancepi{\vect{\pi}_n}{F_n}$. Let $\mathcal{F}_{n,0}\subseteq\cdots\subseteq\mathcal{F}_{n,m_n}$ be any filtration such that $\mathcal{F}_{n,0}$ is trivial and $F_n$ is $\mathcal{F}_{n,m_n}$-measurable. Define the Doob martingale and its differences by
\begin{equation*}
    M_{n,k}\triangleq\condexpectpi{\vect{\pi}_n}{F_n}{\mathcal{F}_{n,k}},\qquad D_{n,k}\triangleq M_{n,k}-M_{n,k-1}.
\end{equation*}
Then $M_{n,0}=\mu_n$, $M_{n,m_n}=F_n$, and $\expectpi{\vect{\pi}_n}{D_{n,k}\mid\mathcal{F}_{n,k-1}}=0$. The unit-wise reveal is $\mathcal{F}_{n,k}=\sigma(Z_{n1},\ldots,Z_{nk})$, where $\sigma(\cdot)$ denotes the sigma-field generated by the first $k$ revealed treatment assignments i.e. all information available after observing them. While this is a natural filtration choice after fixing an ordering of the units \cite{odonnell2014boolean}, it is not intrinsic: the ordering is arbitrary, and valid filtrations may instead reveal blocks or functions of the assignment. 

\begin{theorem}[Doob-martingale CLT]\label{theorem:martingaleclt}
Let $\sigma_n^2>0$ and, as $n\to\infty$,
\begin{align}\label{eq:doob_max_increment}
    \frac{1}{\sigma_n^2}\expectpi{\vect{\pi}_n}{\max_{k\in[m_n]}D_{n,k}^2}&\to 0,\\
    \label{eq:doob_quadratic_variation}
    \frac{1}{\sigma_n^2}\sum_{k=1}^{m_n}D_{n,k}^2&\xrightarrow{p}1.
\end{align}
Then:
\begin{equation*}
    \frac{F_n-\mu_n}{\sigma_n}\xrightarrow{d}\mathcal{N}(0,1).
\end{equation*}
\end{theorem}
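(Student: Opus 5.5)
The plan is to reduce the statement to a standard martingale central limit theorem for triangular arrays of martingale difference arrays, in the form due to McLeish (or Hall and Heyde, Theorem 3.2). We normalize by setting $X_{n,k}\triangleq D_{n,k}/\sigma_n$ for $k\in[m_n]$. Then $\{X_{n,k},\mathcal{F}_{n,k}\}$ is a martingale difference array: each $X_{n,k}$ is $\mathcal{F}_{n,k}$-measurable and square-integrable, since $F_n$ is square-integrable and conditional expectation contracts $L^2$, and $\condexpectpi{\vect{\pi}_n}{X_{n,k}}{\mathcal{F}_{n,k-1}}=0$. Telescoping, and using that $\mathcal{F}_{n,0}$ is trivial and $F_n$ is $\mathcal{F}_{n,m_n}$-measurable, gives
\begin{equation*}
\sum_{k=1}^{m_n}X_{n,k}=\frac{M_{n,m_n}-M_{n,0}}{\sigma_n}=\frac{F_n-\mu_n}{\sigma_n}.
\end{equation*}
So the target is exactly the row sum of the array.

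McLeish's version requires three things:
\begin{enumerate}
\item $\max_k\abs{X_{n,k}}\xrightarrow{p}0$;
\item $\sum_k X_{n,k}^2\xrightarrow{p}1$;
\item $\expectpi{\vect{\pi}_n}{\max_k X_{n,k}^2}$ is bounded in $n$.
\end{enumerate}

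Condition (ii) is exactly \eqref{eq:doob_quadratic_variation}. Condition \eqref{eq:doob_max_increment} says $\expectpi{\vect{\pi}_n}{\max_k X_{n,k}^2}\to0$, which gives (iii) immediately. It also gives (i) by Markov's inequality, since
\begin{equation*}
\probpi{\vect{\pi}_n}{\max_k\abs{X_{n,k}}>\epsilon}\le\epsilon^{-2}\,\expectpi{\vect{\pi}_n}{\max_k X_{n,k}^2}\to0 .
\end{equation*}
McLeish's theorem then yields $\sum_k X_{n,k}\xrightarrow{d}\mathcal{N}(0,1)$, which is the claim.

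There is no nesting requirement here because McLeish's result does not need $\mathcal{F}_{n,k}\subseteq\mathcal{F}_{n+1,k}$. That nesting condition only enters Hall and Heyde's version when the quadratic variation converges to a random limit. Our limit is the constant $1$, so we avoid it.

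If a self-contained route is preferred, I would instead verify the conditional Lindeberg condition together with the conditional-variance condition. Lindeberg follows from
\begin{equation*}
\sum_k\expectpi{\vect{\pi}_n}{X_{n,k}^2\indicator{\abs{X_{n,k}}>\epsilon}}\le\expectpi{\vect{\pi}_n}{\Big(\sum_k X_{n,k}^2\Big)\indicator{\max_k\abs{X_{n,k}}>\epsilon}}.
\end{equation*}
The factor $\sum_k X_{n,k}^2$ has expectation exactly $1$ by orthogonality of the martingale increments. However, convergence in probability of the quadratic variation together with $\mathbb{E}=1$ gives uniform integrability of $\sum_k X_{n,k}^2$, by Scheffé's lemma applied to nonnegative variables, so the right side tends to $0$.

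The main obstacle, on this alternative route, is that the theorem states convergence of the realized quadratic variation $\sum_k D_{n,k}^2$, not the predictable one $\sum_k\condexpectpi{\vect{\pi}_n}{D_{n,k}^2}{\mathcal{F}_{n,k-1}}$. Passing between the two needs the Lindeberg-type control just described. This is precisely why I would cite McLeish's formulation, which is stated directly in terms of $\sum_k X_{n,k}^2$, rather than redo the characteristic-function argument.
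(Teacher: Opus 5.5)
Your proposal is correct and matches the paper's proof. Like the paper, you normalize the Doob differences by $\sigma_n$, telescope the row sum to $(F_n-\mu_n)/\sigma_n$, and check McLeish's three conditions: the max-increment hypothesis gives both the $L^2$ bound on the row maxima and, via Markov's inequality, $\max_k\abs{D_{n,k}}/\sigma_n\xrightarrow{p}0$, and the quadratic-variation hypothesis gives the remaining condition (your remarks on nesting and the conditional-Lindeberg alternative are sound but not needed).
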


\begin{proof}
The normalized differences $\frac{D_{n,k}}{\sigma_n}$ form a martingale-difference array and sum to $\frac{F_n-\mu_n}{\sigma_n}$.  Condition \eqref{eq:doob_max_increment} implies that the largest (normalized) increment converges to zero in probability while also ensuring that the row maxima are uniformly bounded in $L^2$.  Condition \eqref{eq:doob_quadratic_variation} requires their sum of squares---the realized quadratic variation---to converge to one.  Therefore, the result follows from the martingale central limit theorem of McLeish \cite{mcleish1972martingaleclt}.
\end{proof}

\noindent\textbf{Interpretation.} Both CLT conditions are stated in terms of the squared Doob differences $D_{n,k}^2$. However, they rule out two distinct failure modes: \eqref{eq:doob_max_increment} captures a single reveal contributing a non-vanishing share of the total variance, and \eqref{eq:doob_quadratic_variation} captures the accumulated squared differences failing to concentrate around that variance.

\begin{corollary}[Off-policy asymptotic normality]\label{prop:offpolicy_clt}
Let $\widehat{\delta}_n$ be an unbiased off-policy estimator of $\delta_n$ that is square-integrable under $\vect{Z}_n\sim\mathbb{P}_{\vect{\pi}_n}$, and let $\sigma_n^2=\variancepi{\vect{\pi}_n}{\widehat{\delta}_n}$. If its Doob differences under some reveal filtration satisfy \eqref{eq:doob_max_increment}--\eqref{eq:doob_quadratic_variation}, then
\begin{equation*}
    \frac{\widehat{\delta}_n-\delta_n}{\sigma_n}\xrightarrow{d}\mathcal{N}(0,1).
\end{equation*}
\end{corollary}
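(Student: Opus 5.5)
The plan is to reduce the corollary directly to Theorem \ref{theorem:martingaleclt} by specializing the generic Boolean function there to the estimator itself. For each $n$, define $f_n:\booln\to\real$ by $f_n(\vect{z})\triangleq\widehat{\delta}_n$ evaluated at assignment $\vect{z}$. This is well defined because the off-policy estimator \eqref{eq:offpolicy_estimator} is a deterministic function of the realized assignment: the outcomes $Y_i=y_i(\vect{Z})$, the exposure indicators $T_i^\pm$, and the weights $W_{\vect{\pi}\vect{\pi}'}^{\mathcal{S}_i^\pm}$ are all functions of $\vect{Z}_n$, and the normalizers $p_i^\pm(\vect{\pi}')$ are constants. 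Then $F_n=\widehat{\delta}_n$. Square-integrability under $\mathbb{P}_{\vect{\pi}_n}$ is assumed, so $F_n$ falls within the setting of Appendix \ref{sec:doobclt}.

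Next I will identify the centering and scaling. Unbiasedness gives $\mu_n=\expectpi{\vect{\pi}_n}{F_n}=\delta_n$, and by definition $\sigma_n^2=\variancepi{\vect{\pi}_n}{F_n}$. The hypothesis that the Doob differences satisfy \eqref{eq:doob_max_increment} and \eqref{eq:doob_quadratic_variation} under some reveal filtration provides exactly the filtration and conditions that Theorem \ref{theorem:martingaleclt} requires. The filtration must have trivial $\mathcal{F}_{n,0}$ and make $F_n$ measurable at the final step; any reveal of the assignment satisfies this. Applying the theorem gives $(F_n-\mu_n)/\sigma_n\xrightarrow{d}\mathcal{N}(0,1)$, and substituting $F_n=\widehat{\delta}_n$ and $\mu_n=\delta_n$ yields the claim.

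The only point needing care is the implicit requirement $\sigma_n^2>0$. I would handle it by noting that \eqref{eq:doob_max_increment} and \eqref{eq:doob_quadratic_variation} are formulated with division by $\sigma_n^2$, so they are meaningful only when $\sigma_n^2>0$; I would read the hypothesis as including this. No genuine obstacle remains: the substantive difficulty of verifying the martingale conditions for a particular network and Fourier support has been placed in the hypotheses. The corollary is therefore a relabeling, with unbiasedness (Proposition \ref{prop:spectral_transport} under correct support) supplying the identification $\mu_n=\delta_n$.
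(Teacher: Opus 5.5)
Your proposal is correct and matches the paper's argument, which is left implicit: the corollary is Theorem \ref{theorem:martingaleclt} applied with $f_n$ equal to the estimator viewed as a function of the assignment, and unbiasedness gives $\mu_n=\delta_n$. You are also right that the side condition $\sigma_n^2>0$ is implicit in the hypotheses, since both conditions divide by $\sigma_n^2$.
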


\subsection*{Conditions under unit-wise reveals}\label{sec:unit_reveals}

The theorem has been deliberately stated at the level of a chosen reveal filtration. We now consider a unit-wise reveal, for which the expansion in the $\vect{\pi}_n$-biased Fourier basis gives some intuition on what the CLT conditions demand:
\begin{equation}
    \label{eq:doob_fourier_difference}
    D_{n,k}=\sum_{\substack{V\subseteq[k]\\\max V=k}}\hat{f}_{n,\vect{\pi}_n}(V)\,\chi_{\vect{\pi}_n}^V(\vect{Z}_n),
\end{equation}
where $\max\emptyset\triangleq 0$. Understanding the conditions of Theorem \ref{theorem:martingaleclt} require interpreting $D_{n,k}^2$, for which it would be helpful to define:
\begin{equation*}
    \lambda_{ni}\triangleq\frac{1-2\pi_{ni}}{\sqrt{\pi_{ni}(1-\pi_{ni})}}, \qquad \lambda_{n,V}\triangleq\prod_{i\in V}\lambda_{ni}.
\end{equation*}
For a single unit $i$: $\left(\chi_{\vect{\pi}_n}^{\{i\}}(\vect{Z}_n)\right)^2=1+\lambda_{ni}\chi_{\vect{\pi}_n}^{\{i\}}(\vect{Z}_n)$, which implies the multiplication rule:
\begin{equation}
    \label{eq:biased_character_product}
    \chi_{\vect{\pi}_n}^{V}(\vect{Z}_n)\,\chi_{\vect{\pi}_n}^{W}(\vect{Z}_n)=\sum_{U\subseteq V\cap W}\lambda_{n,U}\,\chi_{\vect{\pi}_n}^{(V\triangle W)\cup U}(\vect{Z}_n),
\end{equation}
where $\triangle$ denotes the symmetric set difference. Consequently, substituting \eqref{eq:biased_character_product} into \eqref{eq:doob_fourier_difference} gives an exact Fourier expansion of $\sum_kD_{n,k}^2$ under any positive product policy. At the balanced policy $\vect{\pi}=\vect{\frac{1}{2}}$, all nonempty $\lambda_{n,U}$ vanish and \eqref{eq:biased_character_product} reduces to the XOR rule $\chi_{\vect{\frac{1}{2}}}^V(\vect{Z})\,\chi_{\vect{\frac{1}{2}}}^W(\vect{Z})=\chi_{\vect{\frac{1}{2}}}^{V\triangle W}(\vect{Z})$. This yields a compact and interpretable expression for the Fourier expansion of $\sum_k D_{n,k}^2$ from \eqref{eq:doob_fourier_difference}:
\begin{align}
    \sum_{k=1}^{m_n} D_{n,k}^2=\sum_{\substack{V,W\subseteq[m_n],\\\max V>\max W}}\hat{f}_{n,\vect{\frac{1}{2}}}(V)\hat{f}_{n,\vect{\frac{1}{2}}}(V\triangle W)\,\chi_{\vect{\frac{1}{2}}}^W(\vect{Z}_n).
\end{align}
That is, for a balanced design, the quadratic variation captures the reveal-respecting XOR-autocorrelation of the Fourier spectrum of the projection of $f$ onto the revealed units.

While unit-wise reveals are natural for product policies \cite{odonnell2014boolean}, a filtration adapted to blocks or other functions of the assignment could potentially be more informative. Finding primitive network or Fourier conditions that imply \eqref{eq:doob_max_increment}--\eqref{eq:doob_quadratic_variation} for other useful filtrations remains a separate open question. 

\end{appendix}

\bibliography{bibliography}

\end{document}